\newcommand{\I}{\ensuremath{\mathcal{I}}\xspace}
\newcommand{\R}{\ensuremath{\mathbb{R}}\xspace}
\newcommand{\Z}{\ensuremath{\mathbb{Z}}\xspace}
\newcommand{\F}{\ensuremath{\mathbb{F}}\xspace}
\newcommand{\0}{\ensuremath{\mathbf{0}}\xspace}
\newcommand{\1}{\ensuremath{\mathbf{1}}\xspace}
\newcommand{\x}{\ensuremath{\mathbf{x}}\xspace}
\newcommand{\y}{\ensuremath{\mathbf{y}}\xspace}
\renewcommand{\a}{\ensuremath{\mathbf{a}}\xspace}
\renewcommand{\b}{\ensuremath{\mathbf{b}}\xspace}
\renewcommand{\c}{\ensuremath{\mathbf{c}}\xspace}
\renewcommand{\d}{\ensuremath{\mathbf{d}}\xspace}
\newcommand{\II}{\ensuremath{\mathbf{\boldsymbol{\infty}}}\xspace}
\newcommand{\NP}{\ensuremath{\textrm{NP}}\xspace}
\newcommand{\PSPACE}{\ensuremath{\textrm{PSPACE}}\xspace}
\newcommand{\ER}{\texorpdfstring{\ensuremath{\exists\R}}{ER}\xspace}
\newcommand{\problemname}[1]{\textnormal{\textsc{#1}}\xspace}
\newcommand{\ETR}{\problemname{ETR}}
\newcommand{\ETRAMI}{\problemname{ETRAMI}}
\newcommand{\Distinct}{\problemname{Distinct-ETR}}
\newcommand{\Strict}{\problemname{STRICT-INEQ}}
\newcommand{\Feasibility}{\problemname{Feasibility}}
\newcommand{\matroidrealizability}{\problemname{Matroid \R-Representability}}
\newcommand{\stretchability}{\problemname{stretchability}}
\newcommand{\representable}{representable\xspace}
\newcommand{\rank}{{\sf rk}\xspace}
\newcommand{\wordRAM}{\textnormal{word RAM}\xspace}
\newcommand{\realRAM}{\textnormal{real RAM}\xspace}
\newtheorem{theorem}{Theorem}
\newtheorem{lemma}[theorem]{Lemma}
\theoremstyle{definition}
\title{Representing Matroids over the Reals is \ER-complete}
\author{Eun Jung Kim\affiliationmark{1} \and Arnaud de Mesmay\affiliationmark{2} \and Tillmann Miltzow\affiliationmark{3}\thanks{T. M. is generously supported by the Netherlands Organisation for Scientific Research (NWO) under project no. VI.Vidi.213.150. }}
\affiliation{KAIST, Daejeon, South Korea and CNRS, Paris, France \\
Univ Gustave Eiffel, CNRS, LIGM, F-77454 Marne-la-Vallée, France \\
Department of Information and Computing Sciences, Utrecht University, The Netherlands}
\keywords{Computer Science - Computational Complexity, Mathematics - Combinatorics}
\begin{document}

\maketitle

\begin{abstract} \vspace{-1em} A \emph{matroid} $M$ is an ordered pair $(E,\mathcal{I})$, where $E$ is a finite set called the  \emph{ground set} and a collection $\I\subset 2^{E}$ called the \emph{independent sets} which satisfy the conditions: 
    (i) $\emptyset \in \mathcal{I}$, (ii) $I'\subset I \in \mathcal{I}$ implies $I'\in \mathcal{I}$, and (iii) 
    $I_1,I_2 \in \mathcal{I}$ and $|I_1| < |I_2|$ implies that there is an $e\in I_2$ such that $I_1\cup \{e\} \in \mathcal{I}$.
    The \emph{rank} ${\sf rk}(M)$ of a matroid $M$ is the maximum size of an independent set. 
    We say that a matroid $M=(E,\mathcal{I})$ is \emph{representable} over the reals if there is a map
    $\varphi \colon E \rightarrow \mathbb{R}^{{\sf rk}(M)}$ such that $I\in \mathcal{I}$ if and only if 
    $\varphi(I)$ forms a linearly independent set.
    
    We study the problem of \textsc{Matroid \ensuremath{\mathbb{R}}-Representability} over the reals.
    Given a matroid~$M$, we ask whether there is a set of points in the Euclidean space
    representing~$M$.
    We show that \textsc{Matroid \ensuremath{\mathbb{R}}-Representability} is $\exists \mathbb{R}$-complete, already for matroids of rank~$3$.
    The complexity class $\exists\mathbb{R}$ can be defined as the family of algorithmic problems
    that is polynomial-time equivalent to determining if a multivariate polynomial
    with integer coefficients has a real root.
    
    Our methods are similar to previous methods from the literature.
    Yet, the result itself was never pointed out and there is 
    no proof readily available in the language of computer science.
\end{abstract}

\section{Introduction}
Many articles on matroids assume that the matroid is representable, see for example~\citet*{Lovasz80,cameron2022flag,cameron2017excluded}. 
\nocite{cameron2014kinser,cameron2016excluded, cameron2017polytopal}
Representability either heavily simplifies proofs and definitions or is even essential.
We show that the question of representability over the reals is as difficult as the existential theory of the reals,
that is \ER-complete.
The complexity class \ER can be defined as the family of algorithmic problems
that is polynomial-time equivalent to determining if a multivariate polynomial
(with integer coefficients) has a real root, see Section~\ref{S:overview} for an introduction and overview of this complexity class.

\paragraph{Definitions.}
    Before we give a general definition of a matroid, 
    we introduce \emph{vector matroids}.
    Given a matrix $A$ over a field $\F$, we can define the corresponding 
    vector matroid $M[A]=(E,\I)$ as follows.
    The ground set~$E$ of $M[A]$ is formed by the columns of~$A$ 
    and we say that a subset $I\subset E$ is independent in $M[A]$, i.e., $I\in \I$,
    if the columns are linearly independent over $\F$, that is, the zero vector can be linearly combined by the vectors in $I$ only trivially. A set of elements of $E$ which is not independent is said to be dependent. 
    Note that any set of columns containing a zero column is dependent.
    The independent sets of a vector matroid satisfy three simple properties (see below).
    One way to look at matroids is to see them as abstract set
    systems that have those three properties.
    A \emph{matroid} is an ordered pair $(E,\I)$, where $E$ is a finite set called the  \emph{ground set} and a collection $\I\subset 2^{E}$ called the \emph{independent sets} which satisfy the conditions: 
    \begin{itemize}
        \item[(i)] $\emptyset \in \I$,
        \item[(ii)] $I'\subset I \in \I$ implies $I'\in \I$, 
        \item[(iii)] $I_1,I_2 \in \I$ and $|I_1| < |I_2|$ implies that there is an $e\in I_2$ such that $I_1\cup \{e\} \in \I$.
    \end{itemize}
    The \emph{rank} $\rank(M)$ of a matroid $M$ is the maximum size of an independent set. 
    We say that a matroid $M=(E,\I)$ is \textit{\representable} over \F if there is a matrix $A$ over \F such that $M = M[A]$. 
    Note that all columns of $A$ live in a subspace of dimension at most $\rank(M[A])$.
    Therefore, we can assume without loss of generality that the
    columns of $A$ have dimension $\rank(M[A])$. We refer to~\citet*{oxley2006matroid} for more background on matroids.
    
    Given a matroid $M$, if there exists a matrix $A$ over $\mathbb{R}$ such that $M = M[A]$, we say that $M$ is representable over the reals. 
    The algorithmic problem of \matroidrealizability is to test whether a given matroid is representable over the reals. Since we discuss only the real case in this article, we will sometimes say representable as a shorthand.    
    Note that we also need to specify how the matroid $M$ is given.
    In the literature on matroids, one has often an oracle such that
    one can ask the oracle for each set $I$ whether $I\in \I$.
    We will deviate from this practice, as it might be unclear how
    to describe the oracle.
    Instead, we will just list all sets in $\I$ explicitly.
    This does not blow up the description complexity too much
    in our case, as we will deal mainly with constant rank matroids.

    We refer to Section~\ref{S:overview} for more background, motivation and application of matroids.

\paragraph{Geometric Interpretation.}
    If we have a representation $A \in \R^{3\times n}$ of a matroid $M = M[A]$, we can scale an arbitrary column of $A$ by a nonzero number and it stays a valid representation. 
    This is also the case if we scale by $-1$.
    Furthermore, if we rotate $A$ (i.e., multiply it on the left with an orthogonal transformation) it still stays a valid representation. 
    Thus, in case we have a \representable rank-$3$ matroid over \R, 
    we can assume that there is a representation in which 
    all nonzero vectors have their third coordinate equal to~$1$. 
    In this way, we can consider the columns of~$A$ 
    as a point configuration in the plane with $z=1$.
    The property of three vectors being dependent is then equivalent to the corresponding three points lying on a common line.
    \begin{center}
        \includegraphics{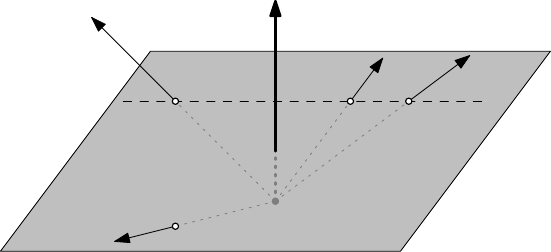}    
    \end{center}

In this geometric interpretation, we could have considered any plane different from the one with $z=1$, as long as it does not contain the origin. This would have yielded a different point configuration. The resulting transformation is called a \emph{projective transformation}. It maps lines to lines, except for one line that disappears. We say that this line is \emph{sent to infinity}. Conversely, for any line in the plane, there exists a projective transformation that sends it to infinity. Throughout this article, we think of representations of rank-$3$ matroids via these point configurations, and thus we will slightly abuse language by calling such a point configuration a representation.

\paragraph{A motivating example.}
One of the standard examples to illustrate realizability is 
the so-called Fano plane.
It is the matroid on seven elements whose maximal independent sets are all the triples except \[\{1,4,7\},\{1,2,3\},\{1,5,6\},\{3,6,7\},\{2,5,7\},\{3,4,5\},\{2,4,6\}.\] 

This can be represented pictorially as in the figure below, where the lines and the circle denote the dependencies.

\begin{center}
\includegraphics[]{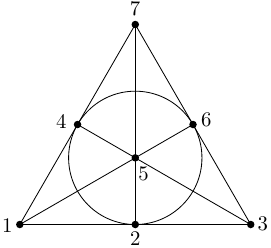}
\end{center}

An immediate question that arises from this picture is whether a picture exists where the circle is not used and the dependencies are all pictured by lines. This is equivalent to asking whether the Fano matroid is representable over the reals. It is well-known not to be~\cite[Proposition~6.4.8]{oxley2006matroid}. 
The problem \matroidrealizability addresses the general question of deciding whether a given matroid can be represented like that, and our main result is that this problem is \ER-complete.

\paragraph{Order Types.}
    We saw above that matroids are an abstraction to describe
    point collinearities in the plane.
    I.e., if we have a rank-3 matroid then every dependent set corresponds to three collinear points.
    Now given a set of points, we are often also interested in the orientation of each triple: either clockwise, counter-clockwise, or collinear.

    \begin{center}
        \includegraphics[]{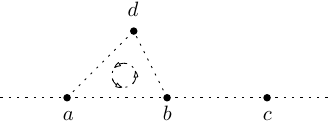}    
    \end{center}
    
    In this example, $\{a,b,c\}$ is collinear and  $(a,b,d)$, $(a,c,d)$, and $(b,c,d)$ are oriented counter-clockwise.
    This leads to the definition of \textit{(abstract) order types},
    which is a pair $O = (E,\chi)$.
    Again, $E$ is a finite set called the ground set. Then \[\chi : \binom{E}{3} \rightarrow \{-1,0,1\}\] is
     a function called a \emph{chirotope} satisfying a few simple properties that are derived from the intuition given above. 
    We say that a point set $P\subset \R^{2}$ \textit{represents} a given order type $O = (E,\chi)$ 
    if $P$ has, for each element $e\in E$, a corresponding point $e'\in P$.
    Furthermore, for each triple $a,b,c\in E$ the corresponding points $a',b',c'\in P$ are oriented according to $\chi(\{a,b,c\})$.
    Note that if we lift every point in $P$ to the plane with $z=1$ as a subset of $\R^{3}$, then we get the following correspondence between $(p_x',p_y',1),(q_x',q_y',1),(r_x',r_y',1)$ and the elements $p,q,r\in E$:
    
    \[
    \text{sign} \, \det \left(\begin{matrix}
    p_x' & q_x' & r_x'\\
    p_y'  & q_y' & r_y' \\
    1  & 1 & 1
    \end{matrix} \right) 
    = \chi(p,q,r).
    \]

    Note that in this specific setup a realization of a rank 3 matroid and order types are closely related.
    While a matroid determines only the collinearities, the order type also determines the orientation of each triple.
    We want to point out that every abstract order type can be represented by a pseudoline arrangement.
    A pseudoline arrangement can be defined as a collection of 
    $x$-monotone curves such that any pair of curves intersects exactly once. The orientation of a triple of pseudolines is defined by the orientation of the triangle that they form (a degenerate triangle corresponding to a zero orientation).
    
        \begin{center}
            \includegraphics[page = 2]{oriented}  
        \end{center}
    
    Note that this pseudoline arrangement corresponds to the order type example given above: $\chi(abc)=0$, $\chi(abd)=1$, $\chi(bcd)=1$ and $\chi(acd)=1$.
    Now, the realizability of the order types is equivalent to the  \stretchability of pseudoline arrangements,
    that is, finding a line arrangement with the same combinatorics as the pseudoline arrangement.
    It is one of the central theorems in the field of the existential theory of the reals that \stretchability is \ER-complete.
    We will use many ideas of that proof for our main result.

    We also want to point out that the notion of an order type can be easily generalized to dimension $d$. 
    The chirotope becomes a function of all $(d+1)$ tuples and tells us the orientation in $d$ dimensions. 
    To illustrate this, if we have four points $a,b,c,e$ in $3$-space, then the points $a,b,c$ lie on a hyperplane $H$. 
    Then the chirotope tells us on which side of $H$ the point $e$ lies, for an orientation of $H$ defined by the three points $a$, $b$ and $c$.

\begin{figure}
\centering
\def\svgwidth{12cm}
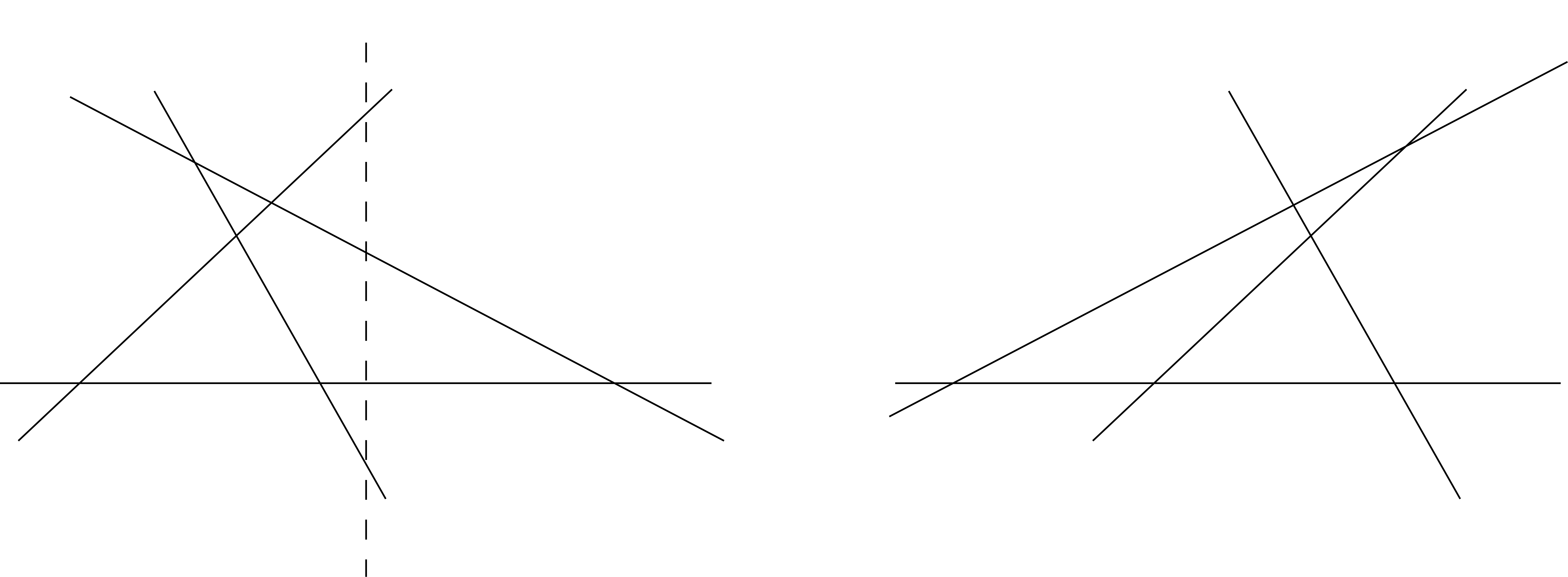
\caption{As $\ell$ separates $f$ from the other points, a projective transformation sending the line $\ell$ to infinity will flip the orientation of the triangles involving $f$ while keeping the other orientations unchanged.}
\label{F:projection}
\end{figure}

    It is important to note that if we take a projective transformation of the plane, we preserve the represented matroid. 
    This is because lines are mapped to lines and points to points.
    However, projective transformations do not preserve the order type of a point set, since a point may end up on the other side of some line, as depicted in Figure~\ref{F:projection}.
    In order to get a closer relationship, we work (sometimes) with matroids endowed with a distinguished line \textit{at infinity} $\ell_\infty$.
    Then we consider \emph{valid} representations of such matroids, which are those where this line is at infinity (i.e., all the points lie on one side of it). This definition extends to rank-$k$ matroids using a hyperplane at infinity and leads to the following definition.

Given an order type $O$, we say that a matroid $M$ \emph{simulates} $O$ if the underlying matroid of $O$ is a subset of $M$ and
if the following conditions are met:
\begin{itemize}
    \item Any representation of the matroid underlying $O$ extends to a representation of $M$.
    \item Any valid representation of $M$ induces a point set representing $O$.
\end{itemize}
    Note that when $M$ simulates $O$, then $M$ has a representation if and only if $O$ has an oriented representation: indeed, starting with a representation of $M$, one can always send the line at infinity to infinity using a projective transformation and thus obtain a valid representation..

\subsection*{Our results}
    Our main theorem is that \matroidrealizability is complete for the existential theory of the reals.

\begin{restatable}{theorem}{selfcontained}
\label{thm:Self-Contained}
\matroidrealizability is \ER-complete.
\end{restatable}

We provide two proofs of \Cref{thm:Self-Contained}. The first one relies on simulating
arbitrary \ETR-formulas using addition and multiplication
and some technical assumptions (see the overview in Section~\ref{S:overview}).
The second proof is somewhat easier, starting from the fact that
order type realizability is \ER-complete, and then simulating order types
using normal matroids.

\begin{restatable}{theorem}{ordertypes}
\label{thm:ordertypes}
Let $k \geq 3$ be a fixed integer. Given a rank-$k$ 
order type
$O$, we can compute in linear time a rank-$k$ matroid $M$ such that $M$ simulates $O$.
\end{restatable}

Theorem~\ref{thm:Self-Contained} easily follows from Theorem~\ref{thm:ordertypes} as deciding whether an order type is representable over the reals is \ER-complete.
This follows from techniques dating back to the proof of the Mn\"{e}v Universality Theorem~\citep{Matousek2014_IntersectionGraphsER,Schaefer2010_GeometryTopology}. 
However, as explained in~\citet*{Matousek2014_IntersectionGraphsER}, the proof that \stretchability is \ER-hard requires a significant number of intricate steps, some of which can be simplified in the setting of \matroidrealizability. 
Indeed, the need for different scales (see for example~\cite[Proof of Theorem~4.6]{Matousek2014_IntersectionGraphsER}), which is the main difficulty in the oriented case, can be completely circumvented in our case.
Therefore, for the sake of completeness and simplicity, we also provide a self-contained proof of Theorem~\ref{thm:Self-Contained}.
We think that it might be educational to first understand the proof of \Cref{thm:Self-Contained}, before one tries to understand the \ER-completeness of \stretchability.

\section{Proof Overview and Background}\label{S:overview}
\subsection*{Proof Overview}
In order to give an idea of the direct proof of \Cref{thm:Self-Contained}, we first sketch an incorrect proof.
Then we point out the issues with this sketch and how we can fix them.

It is folklore that in order to prove \ER-completeness, it is sufficient to find a way to encode variables and some basic operations like addition ($x+y=z$) and multiplication ($xy=z$).
We can force points to lie on a specific line $\ell$ to represent our variables.
Furthermore, using the well-known \textit{von Staudt constructions}, we can simulate all the basic constraints, see \Cref{F:addition-intro} for the construction to simulate addition.
This (almost) describes a rank-3 matroid $M$. 
Furthermore, in any realization of $M$, we can read a valid variable assignment.

\begin{figure}[ht]
    \centering
    \def\svgwidth{9cm}
    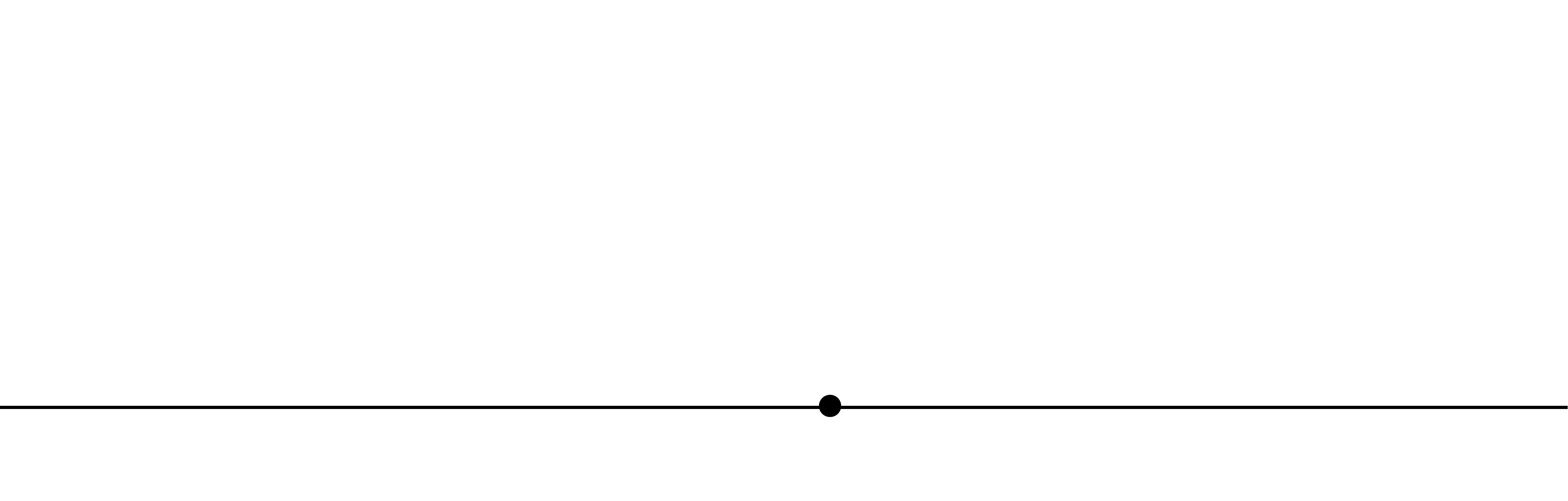
    \caption{Encoding addition geometrically.}
    \label{F:addition-intro}
\end{figure}

The issue with this basic approach is that it could be that there is only one realization of $M$ such that two points coincide, or three points lie on a common line, accidentally. 
If we were able to anticipate this, we could easily specify this in the description of the matroid, but in general, this is not easy.

We circumvent this general position issue with two fixes.
The first fix is to reduce from a version of \ETR where we can assume that all variable values are distinct. We call this variant \Distinct, see \Cref{sec:Distinct}.
The second fix is to observe that when we build the von Staudt construction,  we can ensure that all helper points have enough \textit{freedom} to avoid any coincidences or collinearities with previously defined points, see \Cref{sec:Arithmetic}.
With these two fixes, the above proof sketch works as is explained in the full proof of \Cref{thm:Self-Contained}.

The proof idea of \Cref{thm:ordertypes} goes as follows. Using arithmetic operations, we can give a variable the value $y = x^2\geq 0$. 
Geometrically, this implies that the point representing $y$ is on the same side as the point representing $1$ within the common line $\ell$ (with respect to the point respresenting $0$).
In other words, we can enforce two points to lie on the same side of a line with respect to a given point.
We lift this to half-spaces in the plane and higher dimensions.
In this way, we can enforce consistent orientations of the matroid with the given order type.

\paragraph{Results on \Distinct.}
We define the problem \Distinct as a variant of \ETR as follows (see later for a proper definition of ETR). 
We are given variables $X = \{x_1,\ldots,x_n\}$ and constraints of the form 
\[x+y = z, \quad  x\cdot y = z, \quad x = 1, \quad x>0,\]
for $x,y,z\in X.$
Furthermore, we are promised that there is either no solution at all or there is a solution $(x_1,\ldots,x_n)\in \R^n$ such that $x_i\neq x_j$ for all  $i\neq j$.
We show the following theorem in \Cref{sec:Distinct}, which might be of independent interest.
\begin{theorem}
\label{thm:Distinct-complete}
    \Distinct is \ER-complete.
\end{theorem}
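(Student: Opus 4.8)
The plan is to establish membership in \ER directly and to prove \ER-hardness by reduction from the feasibility problem for systems of strict polynomial inequalities with integer coefficients, which is known to be \ER-complete; we denote this problem \Strict.

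For membership, given a \Distinct instance I would simply ignore the promise and test whether the conjunction of the constraints $x+y=z$, $x\cdot y=z$, $x=1$, $x>0$ has a real solution. This is a special case of the existential theory of the reals, hence in \ER, and on instances that satisfy the promise it returns the correct answer; so \Distinct is in \ER as a promise problem.

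For hardness, start from a \Strict instance: strict inequalities $p_1(x)>0,\dots,p_m(x)>0$ in variables $x=(x_1,\dots,x_n)$. Its solution set $U\subseteq\R^n$ is open, so if $U\neq\emptyset$ it is full-dimensional. Build a \Distinct instance $\Phi'$ as follows. Introduce one unit variable $e$ with the constraint $e=1$, and realize all needed integer constants from $e$ by repeated addition. For each $p_i$, arithmetize its evaluation by a Horner-type scheme into the fragment $\{+,\times\}$, introducing auxiliary variables for the intermediate results, but memoizing the construction so that each distinct polynomial sub-expression that occurs is carried by exactly one variable, and so that a sub-expression equal to an input variable $x_j$ or to the constant $1$ is never given a fresh variable. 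Let $v_i$ be the variable holding $p_i(x)$ and add the constraint $v_i>0$ (and $e>0$ if desired); no other constraints are used. The reduction runs in polynomial time, and a point $x$ (with the auxiliary variables set to their forced values) satisfies $\Phi'$ iff $x\in U$, so $\Phi'$ is feasible iff the \Strict instance is. For the promise: if $U=\emptyset$ then $\Phi'$ is infeasible; if $U\neq\emptyset$, then for any two distinct variables $v,v'$ of $\Phi'$ the difference of the polynomials in $x$ that they compute is, by memoization, a \emph{nonzero} polynomial — for two distinct auxiliary variables since their canonical sub-expressions differ, for an auxiliary variable versus $x_j$ or $e$ since those were never duplicated, and for $x_i$ versus $x_j$ with $i\neq j$ trivially. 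Hence the collision locus $\bigcup_{v\neq v'}\{x:v(x)=v'(x)\}$ is a finite union of proper algebraic hypersurfaces, and the nonempty open set $U$ is not contained in it; any $x\in U$ avoiding this locus, together with the induced auxiliary values, satisfies $\Phi'$ with pairwise distinct coordinates. So $\Phi'$ always satisfies the promise, and \ER-hardness follows.

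The hard part will be the bookkeeping in the arithmetization: one must guarantee that no two variables of $\Phi'$ are forced to be identically equal as polynomials in $x$, the subtle cases being sub-expressions shared among the $p_i$ and intermediate results that are accidentally an input variable or the constant $1$. I want to emphasize that, unlike the related \ER-hardness proof of \stretchability, no "different scales" argument is needed here: working from strict inequalities makes the solution set full-dimensional, which does all the work. Alternatively, one can reduce from plain \ETRAMI and adjoin a generic free parameter $t>0$, substituting $x_i\mapsto z_i-t^{b_i}$ for a fast-growing sequence $b_i$ to spread the variables apart; this avoids invoking \Strict but needs the same care with the auxiliary variables.
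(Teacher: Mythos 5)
Your proposal matches the paper's argument: the paper likewise reduces from a strict-inequality instance (whose \ER-hardness it establishes explicitly via the chain \ETR $\to$ \ETRAMI $\to$ \Feasibility $\to$ \Strict, using the ball-radius and minimum-distance bounds that you black-box as ``known''), builds the polynomial values with deduplicated auxiliary variables, and derives the distinctness promise exactly as you do --- pairwise-distinct polynomials disagree outside a finite union of proper hypersurfaces, which cannot cover the open solution set. The only nick is your ``repeated addition'' for the integer constants, which is not polynomial time if coefficients are given in binary; use double-and-add, or normalize first as the paper does by passing through \Feasibility, where all coefficients are bounded by $36n^3$.
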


While we could not find any prior proof of Theorem~\ref{thm:Self-Contained} in the literature (hence this work), there are many related works from at least two perspectives. 
First, as already mentioned, when one considers order-types instead of unoriented matroids, Theorem~\ref{thm:Self-Contained} is very well-known.
Second, topological universality theorems have been proved for the real-representability of matroids in the algebraic geometry literature, see for example~\citet*{lafforgue2003chirurgie} and~\citet*{lee2013mnev}. 
While our work uses tools that are similar in spirit to those papers, it differs in that our constructions are arguably simpler and that we specifically focus on proving the computational hardness result, which is not the point of focus of those previous works, and is not entirely equivalent (see discussion below). Furthermore, we believe that it is worthwhile to have a complete proof of Theorem~\ref{thm:Self-Contained} in a purely combinatorial language, as opposed to previous works which employ the language of schemes.

Shor's proof of Mnëv's universality theorem~\cite[Section~4]{Shor1991_Stretchability} introduces an intermediate problem called \textit{the existential theory of totally real ordered variables}, which is very similar to \Distinct but features totally ordered variables $x_1<x_2 \ldots < x_n$ as opposed to just requiring distinctness as in our problem. 
This ordering is desirable when one investigates oriented matroids, and unneeded for unoriented ones. This difference allows for a proof that is arguably simpler than his, or at least different.

\subsection*{Background and Related Work}\label{subsec:related}
\paragraph{Matroids and Greedy}
A practical reason why matroids are relevant for computational purposes is that they capture in a simple way the class of discrete objects where greedy algorithms are successful in finding an optimal solution. 
For example, the standard Kruskal and Prim algorithms to compute a Minimum Spanning Tree in a weighted graph can be abstracted by considering the vector matroid defined by an oriented incidence matrix of the graph (called a \emph{graphic matroid}), and then generalized to compute in polynomial time a maximum or minimum-weight basis for any matroid. 
This property actually characterizes matroids, see for example~\cite[Section~1.8]{oxley2006matroid}

\paragraph{Some applications of representability.}
For an algorithm on matroids, a suitable encoding scheme of the input matroid is needed. A common way is to take the input matroid $M=(E,\I)$ in the form of an independence oracle which answers whether a given subset of the ground elements $E$ is independent or not. There are algorithms which run with a polynomial number of queries to such an oracle. For example, a maximum weight independent set of a given matroid can be computed in this way. However, for many natural matroid properties, it is known that there is no algorithm with polynomially bounded  queries to an independence oracle~\citep{JK1982} including representability over the finite field with two elements $GF(2)$ and connectivity of a matroid. 

A vector representation of a matroid offers a compelling alternative to an
independence oracle as matroid operations can be substantially more efficient 
using matrix operations. 
The Matroid parity problem, a common generalization of graph matching and matroid intersection, is solvable in polynomial time given a vector representation~\citep{Lovasz80} while a super-polynomial number of calls is needed 
under the independence oracle model~\citep{JK1982}. Deciding whether 
the branch-width of a matroid is at most $k$ is a common generalization of 
computing the branch-width, rank-width and carving-width of a graph. 
While there is an algorithm with $n^{O(k)}$ queries on an $n$ element matroid 
for this problem~\citep*{OumS07} under the oracle model, it remains unknown
whether the dependency on $k$ 
in the exponent can be replaced by a uniform constant. 
In contrast, the branch-width of a vector matroid can be computed in 
$f(k)\cdot n^3$ time~\citep*{JeongKO21} when the given representation is over 
a finite field $\F$. 

Another powerful application of a vector representation can be found 
in the theory of 
kernelization in parameterized complexity. A surprising discovery 
of Kratsch and Wahlströhm~\citep{KratschW20} is that for many graph cut problems, compressing the input 
boils down to finding a so-called representative set of a matroid. 
When the said matroid is a vector matroid, a representative set of bounded size 
can be efficiently computed in 
polynomial time~\citep*{Lovasz77,Marx09}. It turns out that solutions to graph cut 
problems can be encoded as independent sets in \emph{gammoids}, which form a well-known 
class of representable matroids and of which a vector representation can be 
constructed in randomized polynomial time.

\paragraph{Oriented Matroids.} One might wonder why we jumped from realizability of matroids to realizability of abstract order types, instead of using the perhaps closer notion of oriented matroids~\citep*{bjorner1999oriented}, for which one can also define a realizability problem and investigate its complexity. The reason is that in our arguments, we reason extensively with point configurations, and the geometric interpretation described above does not adapt directly to oriented matroids, as scaling a column by a negative number could lead to a change of the underlying oriented matroid. The correct framework to connect oriented matroids to point configurations is to only consider \emph{acyclic} oriented matroids~\citep*[Section~1.2.b]{bjorner1999oriented}, that is, those for which the geometric interpretation works readily without a need for rescaling by a negative number. 
This notion of acyclic, oriented matroids coincides with the notion of abstract order types, and so do their realizability problems.
Note that in some of the existing literature, oriented matroids and abstract order types are sometimes described as equivalent.
Therefore, we stress this subtle difference here.

\paragraph{The existential theory of the reals.}
The complexity class \ER (pronounced as `ER', `exists R', or `ETR') has gained a lot of interest in recent years. See the compendium by~\citet*{ERcompendium} for a comprehensive overview.
The complexity class is defined via its canonical complete problem \ETR (short for \emph{Existential Theory of the Reals}. ETR refers to a geometric problem and \ER refers to the complexity class. While there are several different variants of ETR, there is only one complexity class.) and contains all problems that polynomial-time many-to-one reduce to it.
In an \ETR instance, we are given a sentence of the form
\[
    \exists x_1, \ldots, x_n \in \R :
    \varphi(x_1, \ldots, x_n),
\]
where~$\varphi$ is a well-formed and quantifier-free formula consisting of polynomial equations and inequalities in the variables and the logical connectives $\{\land, \lor, \lnot\}$.
The goal is to decide whether this sentence is true.
As an example consider the formula $\varphi(X,Y) :\equiv X^2 + Y^2 \leq 1 \land Y^2 \geq 2X^2 - 1$;
among (infinitely many) other solutions, $\varphi(0,0)$ evaluates to true, witnessing that this is a yes-instance of \ETR.
We
use $|\varphi|$ to denote the \textit{length} of $\varphi$, that is, the number of bits necessary to write down
$\varphi$. 
The solution set of an \ETR-formula is called a semi-algebraic set.
The \textit{(bit)-complexity} of a semi-algebraic set is the shortest length of any formula
defining the set.
It is known that
\[
    \NP \subseteq \ER \subseteq \PSPACE
    \text{.}
\]
Here the first inclusion follows because a \problemname{SAT} instance can trivially be written as an equivalent \ETR instance.
The second inclusion is highly non-trivial and was first proven by Canny in his seminal paper~\citep*{Canny1988_PSPACE}.

Note that the complexity of working with continuous numbers was studied in various contexts.
To avoid confusion, let us make some remarks on the underlying machine model.
The underlying machine model for \ER (over which sentences need to be decided and where reductions are performed) is the \wordRAM (or equivalently, a Turing machine) and not the \realRAM~\citep*{Erickson2022_SmoothingGap} or the Blum-Shub-Smale model~\citep*{Blum1989_ComputationOverTheReals}.

The complexity class \ER gains its importance by numerous important algorithmic problems that have been shown to be complete for this class in recent years.
The name \ER was introduced in~\citet*{Schaefer2010_GeometryTopology} who also pointed out that several \NP-hardness reductions from the literature actually implied \ER-hardness.
For this reason, several important \ER-completeness results had been obtained before the need for a dedicated complexity class became apparent.

Common features of \ER-complete problems are their continuous solution space and the nonlinear relations between their variables.
Important \ER-completeness results include the realizability of abstract order types~\citep*{Mnev1988_UniversalityTheorem,Shor1991_Stretchability} and geometric linkages~\citep*{Schaefer2013_Realizability}, as well as the recognition of geometric segment~\citep*{Kratochvil1994_IntersectionGraphs,Matousek2014_IntersectionGraphsER}, unit-disk~\citep*{Kang2012_Sphere,McDiarmid2013_DiskSegmentGraphs},    and ray intersection graphs~\citep*{Cardinal2018_Intersection}.
More results appeared in the graph drawing community~\citep*{Dobbins2018_AreaUniversality,Erickson2019_CurveStraightening,Lubiw2018_DrawingInPolygonialRegion,Schaefer2021_FixedK}, regarding the Hausdorff distance~\citep*{HausDorff}, regarding polytopes~\citep*{Dobbins2019_NestedPolytopes,Richter1995_Polytopes}, the study of Nash-equilibria~\citep*{Berthelsen2019_MultiPlayerNash,Bilo2016_Nash, Bilo2017_SymmetricNash,Garg2018_MultiPlayer,Schaefer2017_FixedPointsNash}, 
training neural networks~\citep*{Abrahamsen2021_NeuralNetworks, 2022trainFull},
matrix factorization~\citep*{Chistikov2016_Matrix,Schaefer2018_TensorRank,Shitov2016_MatrixFactorizations,Shitov2017_PSMatrixFactorization,tunccel2022computational}, or continuous constraint satisfaction problems~\citep*{Miltzow2022_ContinuousCSP}.
In computational geometry, we would like to mention geometric packing~\citep*{Abrahamsen2020_Framework}, the art gallery problem~\citep*{Abrahamsen2018_ArtGallery}, and covering polygons with convex polygons~\citep*{Abrahamsen2022_Covering}.

Recall that NP is usually described using a witness and a verification algorithm. 
The same characterization exists for \ER.
Instead of the witness consisting of binary words of polynomial length, we allow in addition using real-valued numbers as a witness.
Furthermore, in order to be able to use those real numbers, we are 
allowed to work on the so-called real RAM model of computation.
The real RAM allows arithmetic operations with real numbers in constant time~\citep*{Erickson2022_SmoothingGap}.

\paragraph{Topological Universality.}
Many results and techniques on the existential theory of the reals
actually precede the study of this complexity class.
The underlying idea was to study how complicated solution
spaces can be from a topological perspective.
For example, if we want to study convex polytopes, we are often 
interested in the properties of their face lattice, which is the family of faces of different dimension 
together with their inclusion order
The face lattice is a purely combinatorial object.
Therefore, it is natural to ask which face lattices are
realizable by polytopes.
If there existed an easy combinatorial description
of realizable face lattices, convex polytopes would be much better understood.
Given a specific face lattice $L$, we can study its suitably defined solution space $S(L)$.
As the realizability question can be formulated as an
\ETR-formula, it follows that $S(L)$ is a semi-algebraic set.
Now, let $T$ be a different semi-algebraic set, we wonder
whether there exists a face lattice $L$ such that $S(L)$ is homotopy-equivalent to $T$.
Maybe surprisingly topological universality states that 
there is such an $L$ for any semi-algebraic set $T$.
This type of property feels very strong, as it intuitively states
that we can encode the vast complexity of semi-algebraic sets 
into the problem of realizing convex polytopes.
Indeed, many of the results that establish such topological universality also imply \ER-completeness~\citep*{Matousek2014_IntersectionGraphsER}.

However, topological universality can also be established for \NP-complete problems 
as has been shown by~\citet*{TopologicalArt}.
As they showed, for simplicial complexes, it is sufficient to encode the topology as it is possible to triangulate semi-algebraic sets~\citep*{hironaka1975triangulations}.
In other words, the difference between the \textit{wild} semi-algebraic sets
and the \textit{tame} simplicial complexes is not that the former emit a more complicated
topological structure.
The difference comes from the ability of semi-algebraic sets to encode complicated
topological spaces in a much more \textit{concise} manner.
(To be precise the description complexity of a topological space
might be exponentially smaller using the language of semi-algebraic sets,
compared to simplicial complexes.)
\ER-completeness can be interpreted as giving 
a concise encoding of semi-algebraic sets
into a different domain.
To make our life easier, we do not care about preserving the complete topology, but merely the property of being empty or not.
Still, in order to do so one usually also preserves topological properties.
This is the reason why there is a close connection between
topological universality and \ER-completeness.
But given that \NP-complete problems may also admit
universality theorems \ER-completeness may arguably be considered the 
more interesting finding.

\paragraph{Stronger Universality Results.}
We want to point out that 
previous universality results often showed stronger
results than mere topological universality.
For instance, \citet*{Richter1995_Polytopes} showed such 
a stronger universality theorem for polytopes.
Given a face lattice $F$, we can define the set of polytopes
$V(F)$ having face lattice $F$.
Richter-Gebert showed that for every semi-algebraic set $S$ there exists a 
face-lattice $F$ of a polytope such that $V(F)$ is stably-equivalent to $S$.
To define the notion of stable-equivalence goes above the scope of this paper.
We just note that stable-equivalence encapsulates more than just 
the topology of $S$, but also to a degree the ``geometry'' of $S$.

\subsection*{Discussion on Matroid Input Encodings}

In this paper, we study the $\mathbb R$-realizability, where the input matroid is given with all base, that is, maximal independent sets. This would seem unconventional at first glimpse, especially for those familiar with matroid theory. We would like to address the subtleties around our problem setting.

\paragraph{Types of encodings.} 
For matroids, three possible descriptions are examined in the literature, namely an explicit description of sets, a description via an oracle, and a succinct description with a matrix. 

Recall that an input graph for a graph problem can be given as an adjacency matrix, or equivalently the family of vertex subsets of size two. Similarly, an input hypergraph is typically given as a set family with an explicit description of all hyperedges. An immediate analogue of such an hypergraph description for a matroid is an explicit enumeration of all bases (maximal independent sets) or all circuits (minimally dependent sets). However, explicitly stating all independent sets, bases, or circuits is unconventional;   the most common size measure of a matroid is the number of elements, which is polynomially bounded by the size of a graph or matrix that is generalized by a matroid. 
On the other hand, the number of bases or circuits can be prohibitively large in comparison to the number of elements. Specifically, it is known that the number of distinct matroids on $n$ elements is doubly exponential in $n$~\citep*{KNUTH1974}, hence in space of size polynomial in $n$ one cannot describe an arbitrary input matroid. For further details about explicit matroid encoding, see~\citet*{Mayhew08}.

When the matroids under consideration are representable over a field \F, a matrix over \F\ provides a succinct description of a matroid. 
There are well-studied matroid classes that are representable such as 
uniform matroids, graphic matroids, and transversal matroids. However, not all matroids are representable. Hence, an important question is to decide whether an input matroid is representable over a specific field, or over any field at all, and to find a representation if one exists.

Due to the limitations of the above two explicit descriptions, the most common way to encode an input matroid without any restriction is with an oracle, often an independence oracle. 
One can view an independence oracle as a black box expressing a boolean function on $n$ variables. 
The boolean function is on $n$ input variables and outputs 0 or 1 depending on whether the input corresponds to (a characteristic vector of) an independent set of the said matroid. 
Problems with black box functions, e.g., an implicit input with oracle access are studied in the context of learning a function with a small number of queries, e.g. Polynomial Identity Testing, and also in the context of search problems where a graph is accessed by adjacency query. Such a problem does not fit in the classic computational complexity, where an explicit string of numbers is expected as an input. Moreover, learning a black box (boolean) function cannot be done efficiently. As mentioned previously, even when the boolean input functions are restricted to be matroid oracles, deciding whether a nontrivial matroid property holds or not requires $2^{\Theta(n)}$ queries~\citep*{TRUEMPER1982} even for basic properties such as connectivity and representability over $\mathbb F=GF(2)$, and even with a randomized algorithm. 

Therefore, an oracle encoding of the \F-representability problem does not appear to be a fruitful setting to better understand the algorithmic aspects of the problem. Moreover, we shall argue below that, with an explicit matroid description, there is an intriguing difference in the computational complexity of \F-representability between the cases when \F is finite and when \F=\R.

\paragraph{\F-representability with explicit bases description.} 
Let us consider a matroid description that provides a matroid $M$ 
as a pair $(E,\mathcal B)$, where all bases of $M$ are stated in the collection $\mathcal B$. 
In this setting, we shall argue that the problem of deciding whether $M$ 
is $\mathbb F$-representable is in NP for a finite field $\mathbb F$ and in \ER\ for $\F=\R$. We shall also argue that $\mathbb F$-representability is likely to be in co-NP for a finite field $\mathbb F$. This makes an interesting contrast with the case $\mathbb F = \mathbb R$, for which the corresponding non-representability problem is unlikely to be in \ETR due to our main result.

First, let us see that $\mathbb F$-representability is in NP for finite $\F.$ Indeed, a matrix $A$
(whose columns are labeled by the elements of $E$) 
over \F with $M[A]=M$ can be taken
as a witness for $\F$-representability of $M$. 
Moreover, one can conceive a polynomial-time verification algorithm for the pair $M=(E,\mathcal B)$ and $A$ as follows. 
Let ${\mathcal B}(A)$ be the set of bases of $M[A]$ and recall that $M=M[A]$ if and only if $\mathcal B={\mathcal B}(A)$. 
Whether $\mathcal B\subseteq {\mathcal B}(A)$ can be easily verified in time polynomial in $|\mathcal B|+\rank(M)$. 
For this, we first compute the column rank of $A$. 
If it is different from $\rank(M)$, this trivially implies $M[A]\neq M$. 
Henceforth, let us assume that the column rank of $A$ equals $\rank(M).$ Now, 
for each base $B\in \mathcal B,$ one checks whether the submatrix $A[B]$, the submatrix
of $A$ consisting of all columns labeled by the elements of $B$, is full rank.
If any $B\in \mathcal B$ fails the test, we know that $A$ is not a representation of $M$.

Therefore, we may assume that $\mathcal B\subseteq {\mathcal B}(A)$. To verify whether equality holds, we rely on the following Lemma.
 
\begin{lemma}\label{lem:star}
Let $\mathcal B$ be the set of all bases of a matroid $M$ and let $\mathcal B'\subsetneq \mathcal B.$ Then there exist bases $B'\in \mathcal B'$ and 
$B\in \mathcal B'\setminus \mathcal B$ with $|B'\triangle B|=2.$
\end{lemma}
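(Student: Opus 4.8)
The plan is to prove this by an extremal exchange argument. Morally, the statement says that the \emph{basis exchange graph} of $M$---whose vertex set is $\mathcal B$ and in which two bases are joined by an edge exactly when their symmetric difference has size $2$---is connected, so that every nonempty proper set of its vertices has an edge leaving it. I would give a self-contained argument rather than quoting connectivity as a black box, and I would read the conclusion as ``there exist $B'\in\mathcal B'$ and $B\in\mathcal B\setminus\mathcal B'$ with $|B'\triangle B|=2$''.

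First I would record the (weak) \emph{basis exchange property}, i.e., the classical consequence of (I1)--(I3): for any two bases $B_1,B_2$ of $M$ and any $x\in B_1\setminus B_2$ there is some $y\in B_2\setminus B_1$ with $B_1-x+y\in\mathcal B$. I would then assume $\mathcal B'\neq\emptyset$ (this is the only case with content, and it is the case that arises in the application, where $\mathcal B'$ is the set of bases of a matroid). Call a pair $(B',B)$ \emph{good} if $B'\in\mathcal B'$ and $B\in\mathcal B\setminus\mathcal B'$; since $\mathcal B'\subsetneq\mathcal B$, good pairs exist, so pick one minimizing $|B'\triangle B|$. As $|B'|=|B|=\rank(M)$, we have $|B'\setminus B|=|B\setminus B'|=\tfrac12|B'\triangle B|$, a positive integer, hence $|B'\triangle B|$ is a positive even integer; the goal is to show it equals $2$.

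Suppose instead $|B'\triangle B|\geq 4$. Pick $x\in B'\setminus B$ and apply the exchange property to $B_1:=B'$ and $B_2:=B$ to obtain $y\in B\setminus B'$ with $\widehat B:=B'-x+y\in\mathcal B$. A short count (tracking that $x\in B'\setminus B$ and $y\in B\setminus B'$) gives $|B'\triangle\widehat B|=2$ and $|\widehat B\triangle B|=|B'\triangle B|-2\geq 2$. Since $\mathcal B$ is the set of \emph{all} bases of $M$, either $\widehat B\in\mathcal B'$ or $\widehat B\in\mathcal B\setminus\mathcal B'$. In the first case $(\widehat B,B)$ is good with $|\widehat B\triangle B|=|B'\triangle B|-2<|B'\triangle B|$; in the second case $(B',\widehat B)$ is good with $|B'\triangle\widehat B|=2<4\leq|B'\triangle B|$. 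Either way we contradict minimality, so $|B'\triangle B|=2$, and the minimizing good pair is the one required. I do not expect a genuine obstacle here; the only care needed is the bookkeeping of symmetric-difference sizes after a single exchange, and making sure the case split is exhaustive and that each case strictly decreases the minimized quantity (the second case uses $|B'\triangle B|\geq 4$, which is available because that quantity is an even integer greater than $2$). An equivalent but less self-contained route would invoke connectivity of the basis exchange graph and take the first edge on a path from a basis in $\mathcal B'$ to a basis outside $\mathcal B'$.
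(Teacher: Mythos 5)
Your proof is correct and takes essentially the same approach as the paper: the paper maximizes $|B'\cap B|$ over good pairs, which is exactly your minimizing $|B'\triangle B|$, applies the basis exchange property once to get $B''=B'-x+y$, and splits into the same two cases on whether $B''\in\mathcal B'$. (You also correctly read the statement's conclusion as $B\in\mathcal B\setminus\mathcal B'$; the ``$\mathcal B'\setminus\mathcal B$'' in the statement is a typo.)
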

\begin{proof}
Choose $B'\in \mathcal B'$ and $B\in \mathcal B\setminus \mathcal B'$ so that 
$|B'\cap B|$ is maximized. Let $x\in B'\setminus B.$ Recall the Basis Exchange Property: 
\begin{quote}
    For any distinct bases $W', W$ of a matroid and an element 
    $x\in W'\setminus W$, there exists an element $y\in W\setminus W'$ such that 
    $W'-x+y$ is a basis of the matroid.
\end{quote}
By the Basis Exchange Property, there exists $y\in B\setminus B'$ such that 
$B'':=B'-x+y$ is a basis of $M.$ If $B''$ belongs to $\mathcal B',$ we have  $B''\cap B=(B'\cap B)+y$, which contradicts the choice of $B'$ and $B.$ 
Therefore, $B''$ belongs to $\mathcal B\setminus \mathcal B'.$ Note that $|B'\cap B''|=|B'-x|=rk(M)-1$. Thus $B''=B$ by the choice of $B'$ and $B$ and the claim follows.
\end{proof}

Hence, the last step of the verification algorithm tests if there exist a basis $B\in \mathcal B$ and two elements $x\in B$ and $y\in E-B$ such that $B-x+y$ is not a basis of $M$ but the corresponding set of columns of $A$ is independent, which precisely tests if $B-x+y\in \mathcal B(A)\setminus \mathcal B$. If such a triple $B, x, y$ exists, clearly $\mathcal B\subsetneq {\mathcal B}(A)$ and thus $M[A]\neq M.$ Conversely if $\mathcal B\subsetneq {\mathcal B}(A),$ there exists such a triple $B, x, y$ by \Cref{lem:star}. Therefore, we examine all triples $(B,x,y)$ with $B\in \mathcal B$, $x\in B$ and $y\in E-B$ and certify that $B-x+y$ is either in $\mathcal B$ or dependent, in which case the verification algorithm can correctly conclude that $\mathcal B(A)=\mathcal B$, and thus $M[A]=M.$ Otherwise, the verification algorithm concludes $M[A]\neq M$ and rejects the witness $A$.

The presented verification algorithm shows that $\F$-representability is in NP for each finite field $\F.$ A matrix over $\mathbb F$ as witness and the polynomial-time verification algorithm naturally extend to the case when $\mathbb F=\mathbb R$, 
where the verification algorithm works on a real RAM. Therefore, $\mathbb R$-representability is in $\ER$. For details on the 
characterization of $\ER$ via a witness and a polynomial-time verification algorithm on a real RAM, see~\citet*{Erickson2022_SmoothingGap}, and also the discussion in the paragraph above about the existential theory of the reals.

Furthermore, it is likely that $\F$-representability is in co-NP for each finite field $\F.$ 
It is known~\citep*{GeelenW16} that for any prime field $\F$, non-$\F$-representability can be certified by evaluating the ranks of $O(n^2)$ subsets of an $n$-element matroid. 
Notice that when the matroid $M=(E,\mathcal B)$ is given with an explicit description of all the bases $\mathcal B$ of $M$, evaluating $\rank(X)$ for $X\subseteq E$ can be done in time polynomial in the input size because $\rank(X)$ equals the maximum of $|X\cap B|$ over all $B\in \mathcal B.$ 
Therefore, $\F$-representability is in co-NP under the explicit bases description for each prime field $\F.$ For an arbitrary finite field $\F$, not necessarily prime, up to our best knowledge there is no published result which establishes that a polynomial number of rank evaluations suffices for non-$\F$-representability. 
However, it is known that a positive resolution of Rota's conjecture implies that only a constant, depending on $|\F|$ only, number of rank evaluations would suffice~\citep*{oxley2006matroid} to certify that a given matroid is not $\F$-representable. 
The proof of Rota's conjecture was announced in 2014 by~\citet*{GGW14} although it is expected to take a few more years for the full proof to be written for publication.

Therefore, $\F$-representability appears to be in NP $\cap$ co-NP when the input is given as the exhaustive list of bases for each finite $\F$ given the claimed proof of Rota's conjecture. 
Given that, deciding the computational complexity of $\F$-representability for each finite $\F$ with explicit bases description is an intriguing question. For $\F=GF(2),$ a polynomial-time algorithm is straightforward from the uniqueness of a binary representation (up to linear transformation) and the fact that such a representation can be efficiently obtained~\citep*{oxley2006matroid}; after constructing a matrix over $GF(2)$, we apply the above verification algorithm for NP membership. However, even for $\F=GF(3)$ it is not clear whether a matrix over GF(3) can be efficiently constructed although it is known that there is a unique representation over $GF(3)$ for a matroid representable over $GF(3).$ As far as we are aware, there is no efficient procedure known for constructing the representation of a matroid $M$ with a promise that $M$ is representable over $GF(3),$ when $M$ is given with an independence oracle or even given as a matrix over the rationals $\mathbb Q$~\citep*{Hlineny06}. Getting an input matroid as explicit bases description might help to circumvent this obstacle.

In contrast to the case of finite fields, it is impossible to certify non-$\R$-representability with a polynomial number of rank evaluations~\citep*{oxley2006matroid}. Finally, for $\R$-representability we showed \ER-completeness, exhibiting a noticeable diversion from $\F$-representability for finite $\F$ which appears neither NP-complete nor co-NP-complete with explicit bases description under the assumption NP $\neq$ co-NP. Therefore, our result highlights the recurring contrast between representability over a finite field and over the reals.

\section{Distinct-ETR}
\label{sec:Distinct}
This section serves as a preparation for the later reduction. Specifically, we show in Lemma~\ref{lem:distinctER} the \ER-completeness of the variant of \ETR called \Distinct that we defined in the introduction: in this variant, we are guaranteed that either there is no solution or there is a solution with all variables holding \textit{distinct} values. 
This property will be key for encoding into matroids. Most of this section follows standard techniques.

\paragraph{Overview.}
This section is dedicated to the proof of the lemma.
The idea is that we first establish the hardness of an \ETR variant (\Strict ) with an open solution space.
It is clear that all variables can be assumed there to have distinct values.
Then, we reduce again to a variant where we use only the basic constraints.  

The reduction goes in four steps.
\begin{enumerate}
    \item From \ETR to \ETRAMI.
    \item Then from \ETRAMI to \Feasibility.
    \item Then from  \Feasibility to \Strict.
    \item And at last from \Strict to \Distinct.
\end{enumerate}

Note that steps 1, 2, and 3 have already been done (among others) by~\citet*{Schaefer2017_FixedPointsNash}.
We sketch the main steps of their reduction.
Specifically, we point out some properties that were not explicitly emphasized.
We start to explain a simple trick that is excessively used in those types of constructions in order to build small, very small, and  very large numbers.

\paragraph{Number Constructions.}
Before we describe the reduction, we show how to construct variables that must have specific rational values.

If we want to build integers of polynomial size, we can do this by simply repeatedly adding a one.
\[a_1 = 1,\quad  a_{i+1} = a_i + a_1.\]
It is easy to see that $a_i = i$, for all $a_i$ that are defined in this way.

If we want to build a very large number, say $2^{2^{k}}$, the previous approach cannot be done in a polynomial number of steps.
Instead, we can use repeated squaring as follows.
\[x_0 = a_2 + a_0 (= 2), \quad x_{i+1} = x_i^2,\]
for $i = 1,\ldots,k$.
It holds inductively that $x_i  = 2^{2^{i}}$.
Similarly, we can construct very small numbers  say $2^{2^{-k}}$, as follows:
\[y_0 + y_0 = 1, \quad y_{i+1} = y_i^2,\]
for $i = 1,\ldots,k$.
It holds inductively that $y_i  = 2^{-2^{i}}$.
Note that we can also use strict inequalities to build large and small numbers.
For example,
\[x_0 > 2, \quad x_{i+1} > x_i^2,\]
for $i = 1,\ldots,k$ 
implies that $x_i  > 2^{2^{i}}$.

We will use these standard tricks repeatedly later in the reduction.

\paragraph{Reduction from \ETR to \ETRAMI.}
We define the problem \ETRAMI (see~\citet*{dynamic} for more background on this problem) as a variant of \ETR as follows.
We are given a set of variables $X = \{x_1,\ldots,x_n\}$ and constraints of the form 
\[x+y = z, \quad  x\cdot y = z, \quad x = 1, \]
for $x,y,z\in X.$ Therefore, \ETRAMI is a variant of \ETR without negations, inequalities, disjunctions, conjunctions and the only constant is $1$.
It is folklore that \ETRAMI is \ER-complete and follows implicitly from various papers, for example~\citet*{Matousek2014_IntersectionGraphsER, Schaefer2017_FixedPointsNash, Shor1991_Stretchability}.
\begin{lemma}[folklore]
    \ETRAMI is \ER-complete.
\end{lemma}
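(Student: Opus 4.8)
The plan is to reduce from the standard ETR problem, which is known to be \ER-complete, showing that ETRAMI is \ER-hard; membership in \ER is immediate since an ETRAMI instance is a special case of an ETR sentence. The reduction proceeds in several routine stages. First, I would bring an arbitrary ETR instance into a normal form in which the only atomic predicates are of the shape $x+y=z$, $x\cdot y = z$, $x = c$ for an integer constant $c$, and $x \ge 0$ (or $x > 0$); this is done by introducing a fresh variable for every subterm of every polynomial appearing in $\varphi$ and recording the defining arithmetic relation as a new equation, then expressing each polynomial (in)equality and each Boolean connective in terms of these. Negations of polynomial equalities and the connectives $\land, \lor$ can be removed by the usual tricks: a conjunction is just the union of constraint sets; a disjunction $P \lor Q$ is encoded by introducing a switching variable $s$ with $s\cdot(1-s)=0$ together with $s\cdot(\text{slack}_P)=0$, $(1-s)\cdot(\text{slack}_Q)=0$; and strict/non-strict inequalities $p \ge 0$ become $p = w^2$ for a fresh $w$, while $p > 0$ becomes $p\cdot w = 1$. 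This is exactly the content of the already-cited reductions of Schaefer and \v{S}tefankovi\v{c}~\cite{Schaefer2017_FixedPointsNash} and of~\cite{Matousek2014_IntersectionGraphsER}, so I would invoke those rather than redo them.

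The remaining work is to eliminate inequalities entirely and to shrink the allowed constants down to just $1$, landing in ETRAMI proper. Eliminating an inequality $x \ge 0$ is done as above by replacing it with $x = w\cdot w$; after this step only equations of the forms $x+y=z$, $x\cdot y = z$, and $x = c$ survive. To reduce the constants: using the number-construction gadgets recalled just before the lemma, a variable forced to equal any fixed positive integer $c$ can be built from a variable $a_1 = 1$ by $O(c)$ additions, or, if $c$ is large (given in binary), by repeated squaring and additions in $O(\log c)$ steps — in either case polynomially many new variables and constraints in $|\varphi|$. Negative constants $x = -c$ are obtained by forcing a variable $t$ with $t + (\text{the variable equal to }c) = (\text{the variable equal to }0)$, where $0$ itself is built as $0+0 = 0$ together with $1+0=1$ (or simply $x+x=x$ combined with $x+1\neq x$ is not needed — one only needs $0$ as an additive identity, pinned down by $a_1 + z = a_1$). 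Rational constants $p/q$ are handled by pinning a variable $t$ with $q\cdot t = p$ after building integer variables for $p$ and $q$. Collecting all gadgets, one obtains an ETRAMI instance whose size is polynomial in $|\varphi|$ and which is satisfiable over \R if and only if $\varphi$ is.

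Finally one must check soundness and completeness of the composite reduction: every gadget introduced is an \emph{equivalence} (the projection of its solution set onto the original variables is exactly the original solution set), so satisfiability is preserved, and the transformations are all computable in polynomial time on a word RAM, as required for an \ER-reduction. The main obstacle, such as it is, is bookkeeping rather than mathematics: one has to be careful that the inequality-elimination step is performed \emph{before} fixing the constant alphabet (since the squares and reciprocals it introduces are themselves just fresh variables tied by $\cdot$-constraints, no new constants appear), and that the disjunction gadget's slack variables interact correctly with the subsequent inequality elimination — concretely, a slack variable for $p \ge 0$ inside a disjunction should be introduced as $p = w^2$ and then the switch multiplies $w^2$, not $p$, to avoid reintroducing an inequality. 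Since all of these steps are standard and appear (at least implicitly) in~\cite{Matousek2014_IntersectionGraphsER, Schaefer2017_FixedPointsNash, Shor1991_Stretchability}, I would present the argument as a guided tour through these references, emphasizing only the constant-reduction step, which is the one piece genuinely specific to the ETRAMI formulation.
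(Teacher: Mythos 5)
Your overall route is the same folklore reduction the paper sketches: normalize the Boolean structure, replace inequalities by equations with fresh square/reciprocal witnesses, build the needed integer constants by repeated addition and squaring, and flatten polynomials into three-address constraints $x+y=z$, $xy=z$. The one structural difference is how you handle disjunction: the paper first collapses the whole formula to a single equation $p=0$ via $p=0\lor q=0\;\mapsto\;pq=0$ and $p=0\land q=0\;\mapsto\;p^2+q^2=0$, and only then flattens; you instead keep a constraint system and introduce a $\{0,1\}$-valued switch $s$ with $s(1-s)=0$ multiplying per-disjunct slacks. Both work, but note that your switch gadget still forces you to compress each disjunct into a single slack polynomial (a sum of squares for conjunctions inside a disjunct), at which point you have essentially re-derived the paper's $pq=0$ trick with extra variables; the paper's version handles arbitrary nesting of $\land/\lor$ uniformly and is the simpler bookkeeping.

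There is one concrete error: you encode $p>0$ as $\exists w:\ p\cdot w=1$. That formula is satisfiable exactly when $p\neq 0$ (take $p=-1$, $w=-1$), so it does not express strict positivity and the gadget as written fails. The correct encoding, which the paper uses, is $\exists w:\ p\cdot w^2=1$: this forces $w\neq 0$, hence $w^2>0$, hence $p>0$, and conversely any $p>0$ admits $w=1/\sqrt{p}$. The fix is local and does not disturb the rest of your argument (the new constraint is still built from multiplications and the constant $1$), but as stated the step would make the reduction unsound for formulas containing strict inequalities.
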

\ER-membership follows from the definition.
The idea of the reduction is to simplify an \ETR-formula in each step. 
For instance, we can remove negations by replacing $\lnot p>0$ by $p\leq 0$, where $p$ is a polynomial. In this way, we can remove all negations.
We can replace inequalities by observing that $p\geq0$ is equivalent to $\exists x\colon p = x^2$  and 
$p>0$ is equivalent to $\exists x: px^2 = 1$.
We can replace $p=0 \land q=0$ by $p^2 + q^2 = 0$.
Similarly, we can replace $p=0 \lor q=0$ by $pq = 0$.
Thereafter, we end with a single polynomial equation $p=0$.
We construct variables for each coefficient value.
Then we replace each coefficient with an appropriate variable.
At last, we replace each occurrence of multiplication and addition inductively by introducing one more variable and one more constraint. 
We end with a single equation of the form $x=0$, which can be replaced by $z= 1$ and $z + x = z$.

\paragraph{Reduction from \ETRAMI to \Feasibility.}

In \Feasibility, we are given a single polynomial $p \in \Z[x_1,\ldots,x_n]$ of degree at most four.
We are asked if there exists some $x\in \R^n$ such that
$p(x) = 0$.
Furthermore, we require each coefficient to be of absolute value at most $36n^3$. Below, we will show how to achieve this upper bound.

\begin{lemma}
    \Feasibility is \ER-complete.
\end{lemma}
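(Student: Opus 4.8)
The plan is to reduce from \ETRAMI, which by the previous lemma is \ER-complete. So we are given a system of constraints of the form $x+y=z$, $x\cdot y=z$, and $x=1$ over variables $X=\{x_1,\ldots,x_n\}$, and we want to produce a single degree-$\le 4$ polynomial $p\in\Z[x_1,\ldots,x_n]$ whose zero set (over $\R$) is nonempty iff the \ETRAMI system is satisfiable, with all coefficients bounded by $36n^3$ in absolute value. The natural idea is: each constraint can be rewritten as an equation $q_i(x)=0$ where $q_i$ is a polynomial of degree at most $2$ (namely $x+y-z$, $xy-z$, or $x-1$), and a real point satisfies the whole system iff it satisfies $\sum_i q_i(x)^2=0$, since a sum of squares of reals vanishes iff each term does. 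Squaring a degree-$\le 2$ polynomial gives degree $\le 4$, so $p:=\sum_i q_i(x)^2$ has the right degree. This is exactly the standard "sum of squares" collapse, and \ER-membership of \Feasibility is immediate since it is a special case of \ETR.

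The remaining work — and the real content of the lemma, since everything above is routine — is the coefficient bound $36n^3$. After expanding $p=\sum_i q_i^2$ and collecting like monomials, a coefficient of a fixed monomial could in principle be as large as the number of constraints, which need not be bounded by a fixed polynomial in $n$ a priori. First I would note that we may assume there are at most $O(n^2)$ constraints: constraints $x=1$ matter only once per variable, and for the arithmetic constraints, if two constraints have the same left-hand-side pair $(x,y)$ and the same operation they force the same $z$ (up to introducing an equality, i.e.\ a further constraint $z=z'$ which can itself be encoded), so after a linear-time preprocessing pass one has at most one addition constraint and one multiplication constraint per ordered pair, giving $O(n^2)$ constraints total. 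Then each of the finitely many monomials of degree $\le 4$ in $n$ variables receives a bounded contribution from each $q_i^2$: expanding $(x+y-z)^2$, $(xy-z)^2$, $(x-1)^2$ one sees every coefficient appearing is in $\{-2,-1,1,2\}$, so the total coefficient of any monomial in $p$ is at most (a small constant) times the number of constraints involving that monomial. A careful bookkeeping of how many of the $O(n^2)$ constraints can contribute to a single monomial yields the stated bound $36n^3$; I would present this as the main estimate of the proof.

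The step I expect to be the genuine obstacle is precisely this quantitative accounting: one must be a little careful that the preprocessing which reduces the number of constraints to $O(n^2)$ does not itself blow up the variable count or break the promise structure, and then one must track, monomial by monomial, exactly which constraints $q_i$ contribute and with what sign, to land below $36n^3$ rather than merely $O(n^3)$. Degree-$1$ monomials (like $x$) can receive contributions from many constraints — every $x+y=z$ and every $xy=z$ touching $x$ — which is where a factor of $n$ (number of partners) times a constant times the number of constraint-types enters; the worst case is a linear monomial $x$, whose coefficient aggregates cross terms $-2xz$, $+2x$, etc.\ across all $O(n^2)$ constraints, and bounding this sum by $36n^3$ is the crux. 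Everything else — degree bound, sum-of-squares equivalence over $\R$, and \ER-membership — is immediate, so the proof is essentially: (i) rewrite each constraint as $q_i=0$ with $\deg q_i\le 2$; (ii) set $p=\sum q_i^2$; (iii) preprocess to keep $O(n^2)$ constraints; (iv) verify the coefficient bound by direct expansion.
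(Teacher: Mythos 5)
Your overall approach (sum of squares from \ETRAMI, degree-$\le 4$, \ER-membership via special case) is exactly the paper's. But the part you flag as ``the real content'' --- the coefficient bound --- is where your proposal goes off the rails, and the fix is much simpler than what you attempted. You try to preprocess to $O(n^2)$ constraints by merging constraints with the same left-hand side and operation, re-expressing the merger via an extra equality $z=z'$; this is (a) not clean, since $z=z'$ is not one of the permitted constraint forms and so needs to be re-encoded, potentially reintroducing constraints, and (b) internally inconsistent, since if you actually landed at $O(n^2)$ constraints you would get an $O(n^2)$ coefficient bound, not $36n^3$, so your claimed ``careful bookkeeping yields $36n^3$'' cannot be coming from an $O(n^2)$ count.

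The paper's observation is far more elementary and does not involve merging at all: there are only $3n^3$ \emph{possible distinct} constraints over $n$ variables (at most $n^3$ of the form $x+y=z$, at most $n^3$ of the form $xy=z$, and at most $n$ of the form $x=1$). Duplicate constraints are trivially redundant and can be dropped, so $m\le 3n^3$. Each $f_i^2$ contributes coefficients of absolute value at most $2$ to each monomial (indeed, at most $12$ in total across its at most six monomials), giving the bound $12m\le 36n^3$ directly. You should replace your preprocessing paragraph with this one-line counting argument; then the rest of your write-up stands.
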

Again, \ER-membership follows from the fact that \Feasibility is a special case of \ETR.
To show hardness we sketch a reduction from \ETRAMI that is already known~\citep*{Matousek2014_IntersectionGraphsER, Schaefer2017_FixedPointsNash}.
Let $f_1=0,\ldots,f_m=0$ be the equations of some \ETRAMI instance $\varphi$.
(For example $x+y=z$ becomes $x+y-z = 0$.)
Let 
\[ p = f_1^2 + \ldots + f_m^2.\]
Clearly, $\varphi$ is satisfiable if and only if $p$ has a zero. 
As each $f_i$ has a degree at most two, it holds that $p$  has degree at most four.
Since there are three types of equations, each involving at most three variables, there are only $3n^3$ possible distinct constraints in $\varphi$. Thus we have $m\leq 3n^3$. 
Note that each term $f_i^2$ gives rise to at most six monomials and each coefficient is at most two. (For example, $(x+y-z)^2 = x^2 + 2xy - 2xz + y^2 -2yz + z^2$.)
Therefore each coefficient has absolute value at most $12m = 36n^3$.
Thus, we can rewrite $p$ as a sum of monomials with bounded-sized coefficients, as claimed.

\paragraph{Reduction from \Feasibility to \Strict.}
In a \Strict instance, we are given a sentence of the form
\[
    \exists x_1, \ldots, x_n \in \R :
    \varphi(x_1, \ldots, x_n),
\]
where~$\varphi$ is a well-formed and quantifier-free formula consisting of polynomial strict-inequalities in the variables and the logical connectives $\{\land, \lor\}$.

Note that the solution space $\{x \in \R^n: \varphi(x)\}$ is always open.

\begin{lemma}
    \Strict is \ER-complete.
\end{lemma}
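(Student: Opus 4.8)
\textbf{Proof plan for \Cref{thm:Distinct-complete} (stated here as: \Strict{} is \ER-complete).}

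The plan is to reduce from \Feasibility, using the previous lemma, and exploit the fact that the single polynomial $p$ coming from a \Feasibility{} instance is a sum of squares. Write $p = f_1^2 + \cdots + f_m^2$ with each $f_i$ of degree at most two and each coefficient of absolute value at most $36n^3 =: C$; the key point is that $p(x) = 0$ if and only if $f_i(x) = 0$ for every $i$, i.e.\ if and only if $x$ lies in the real variety $V = \{x : p(x) = 0\}$, which is exactly the set where $p$ attains its minimum value $0$. The obvious obstruction is that ``$p = 0$'' is not a strict inequality and ``$p \le 0$'' is not allowed either. First I would handle this by relaxing the equality to the \emph{open} condition $p(x) < \varepsilon$ for a cleverly chosen small $\varepsilon$: the idea is that for rational $\varepsilon$ small enough, $\{x : p(x) < \varepsilon\}$ is nonempty if and only if $V$ is nonempty. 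One direction is trivial ($V \neq \emptyset$ gives a point with $p = 0 < \varepsilon$). For the converse one needs that if $V = \emptyset$ then $p$ is bounded away from $0$ by some explicitly computable amount; but $p$ has no finite positive lower bound in general (it may tend to $0$ at infinity), so one must be more careful.

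The fix I would use is the standard trick of introducing an auxiliary variable to control the scale, combined with the ``number construction'' gadgets from the \Strict{}-target language. Concretely, introduce a new variable $t$ together with strict inequalities forcing $t$ to be a very large number (of the form $t > 2^{2^{k}}$ for suitable polynomial $k$, built by repeated squaring $t_0 > 2$, $t_{i+1} > t_i^2$), and similarly a very small $\delta = 1/t$ (via $\delta \cdot t' = 1$ with $t' > t$, or directly $\delta$-gadgets). Then replace ``$p(x) = 0$'' by the conjunction of strict inequalities $-\delta < x_j < \delta \cdot (\text{something})$... more precisely, I would first observe that the solutions of the original \ETRAMI{}/\Feasibility{} instance can be assumed to live in a bounded box (this is part of the folklore reductions, or can be arranged by an explicit ball-constraint $\sum x_j^2 < R$ with $R$ a doubly-exponential constant built by the same squaring trick), and then, on this compact region, $p$ \emph{does} have the property that $V = \emptyset$ implies $\min p > 0$. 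Quantitative semialgebraic bounds (effective Łojasiewicz / the fact that a polynomial with integer-ish bounded coefficients that is positive on a box of doubly-exponential radius is bounded below by a doubly-exponentially small positive rational) then give an explicit rational $\varepsilon > 0$, writable with polynomially many bits, such that on the box $\{p < \varepsilon\} \neq \emptyset \iff V \neq \emptyset$. The resulting \Strict{} instance is: $\sum x_j^2 < R \ \wedge\ p(x) < \varepsilon$, where $R$ and $\varepsilon$ are expanded into strict-inequality gadgets via the repeated-squaring constructions, and $p(x) < \varepsilon$ is expanded by clearing denominators so that only polynomial strict inequalities with integer coefficients remain. Only $\wedge$ and strict polynomial inequalities are used, so this is a legal \Strict{} instance, and its size is polynomial.

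The main obstacle — and the step I would spend the most care on — is justifying the explicit choice of $\varepsilon$: one must show that if the variety of $p$ restricted to the box is empty, then $p \ge \varepsilon$ on the box for an $\varepsilon$ that is at worst doubly-exponentially small and hence has a polynomial-size bit description (so the reduction stays polynomial). This is where one invokes effective bounds from real algebraic geometry — e.g.\ that a polynomial of degree $d$ in $n$ variables with integer coefficients of absolute value $\le H$ that is strictly positive on $[-R,R]^n$ takes a minimum value there which is at least $(\text{something like}) \big(H R\big)^{-2^{O(n)}}$ — together with the fact that our $d$ is constant ($\le 4$), $H$ is polynomial ($\le 36n^3$), and $R$ is the doubly-exponential radius we chose; plugging in gives a bound of the claimed form. (Alternatively, one can avoid quoting a black-box bound by using the sum-of-squares structure directly: on the box, $p < \varepsilon$ forces each $|f_i| < \sqrt{\varepsilon}$, and one can then perturb/round to an exact solution of a nearby \ETRAMI{} instance — but the cleanest writeup is via the compactness-plus-effective-bound argument.) Once $\varepsilon$ and $R$ are fixed, the remaining steps — assembling the squaring gadgets, clearing denominators, and checking that openness of the solution set is automatic — are routine, and the openness remark already in the excerpt records why all variable values in any witness can then be perturbed to be distinct, which is what \Cref{thm:Distinct-complete} ultimately needs.
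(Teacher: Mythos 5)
Your proposal follows essentially the same route as the paper: reduce from \Feasibility, restrict attention to a doubly-exponential ball, replace $p=0$ by a strict-inequality window whose width $\delta$ is computed from a quantitative real-algebraic-geometry bound (the paper quotes the explicit lemmas of Schaefer and \v{S}tefankovi\v{c}, \Cref{lem:BoundingRadii} and \Cref{lem:MinDistance}, where you write ``effective \L{}ojasiewicz''), and build both the radius $R$ and the threshold $\delta$ inside the \Strict{} instance via repeated squaring so the bit-complexity stays polynomial. The one genuine deviation is that you use the sum-of-squares structure of $p$ to collapse the two-sided window $-\delta<p(x)<\delta$ to the one-sided $p(x)<\varepsilon$; this is a legitimate small simplification (one should note that it relies on the hard instances of \Feasibility being sums of squares, which is not part of the problem's literal definition but is preserved by the standard reduction), whereas the paper keeps the two-sided form and so works for arbitrary $p$ of bounded degree. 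Be careful with the phrase ``clearing denominators'': the threshold $\varepsilon$ is doubly-exponentially small, so you cannot write it as a literal integer-coefficient constant; you must, as you also say, encode it as an auxiliary variable pinned down by a chain of strict squaring inequalities, which is exactly what the paper does.
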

Again membership follows from the fact that \ETR is more general then \Strict.
To show \ER-hardness we reduce from \Feasibility.
The idea of the reduction is to replace $p(x) = 0$ by $-\delta <p(x)< \delta$, for some sufficiently small $\delta$.
To this end, we employ two lemmas as formulated in~\citet*{Schaefer2017_FixedPointsNash}.
Note that the actual proof comes from real algebraic geometry and can be found for instance in~\citet*{basu2010bounding} and~\citet*{jeronimo2013minimum}.

\begin{lemma}
    \label{lem:BoundingRadii}
    Every non-empty semi-algebraic set in $\R^n$ of complexity at most $L \geq 4$ contains a point of distance at most $2^{L^{8n}}$ from the origin.
\end{lemma}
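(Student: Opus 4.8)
*Every non-empty semi-algebraic set in $\R^n$ of complexity at most $L \geq 4$ contains a point of distance at most $2^{L^{8n}}$ from the origin.*

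The plan is to not reprove this statement from scratch, since it is a quantitative result from real algebraic geometry whose proof we only want to cite. The approach is to reduce it to a known effective-bound theorem on the sampling of semi-algebraic sets, of the type proved by Basu, Pollack and Roy and refined in works such as \cite{basu2010bounding} and \cite{jeronimo2013minimum}. Concretely, the first step would be to recall that a semi-algebraic set $S\subseteq\R^n$ of complexity at most $L$ can be written using at most $L$ polynomials, each of degree at most $L$ and with integer coefficients of bit-length at most $L$ (this is exactly what ``complexity at most $L$'' encodes: the total number of bits of a defining formula). The second step is to invoke the standard constructive result that for such a description there is a finite set of sample points meeting every connected component of $S$, each sample point having coordinates that are algebraic numbers of degree and height bounded explicitly in terms of $n$, $L$, and the degree bound --- a bound of the form $d^{O(n)}$ in the degree and $\mathrm{poly}$ in the bit-length, in the exponent. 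The third step is a routine root-separation / root-size estimate: an algebraic number that is a coordinate of such a sample point is a root of a univariate polynomial of controlled degree and coefficient size, hence its absolute value is bounded by a simple expression in those two quantities (e.g.\ via Cauchy's bound $1+\max|a_i/a_d|$). Combining, the Euclidean norm of at least one point of $S$ is at most $2^{L^{8n}}$, where the crude exponent $L^{8n}$ is chosen generously so that it dominates all the polynomial-in-$L$ and $d^{O(n)}$ factors that appear; I would simply verify that the product of all the intermediate bounds is $\le 2^{L^{8n}}$ for $L\ge 4$, absorbing constants into the exponent.

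The steps in order: (1) unpack ``complexity at most $L$'' into explicit degree/number/coefficient-size bounds on a defining system; (2) quote the effective sampling theorem to get one representative point per connected component with algebraic coordinates of bounded degree and height; (3) apply a univariate root-magnitude bound (Cauchy) to each coordinate; (4) aggregate across the $n$ coordinates to bound the norm; (5) check that all the accumulated factors fit under $2^{L^{8n}}$ for $L\ge 4$. Since the paper explicitly states ``the actual proof comes from real algebraic geometry and can be read for instance in~\cite{basu2010bounding} and~\cite{jeronimo2013minimum}'', I would present this as a citation-backed lemma and only spell out the reduction from the abstract ``complexity'' measure to the concrete degree/height data, leaving the sampling theorem itself as a black box.

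The main obstacle is purely bookkeeping: making sure the ``complexity'' parameter $L$ genuinely controls all of the number of polynomials, their degrees, and their coefficient bit-lengths simultaneously, and then tracking how these propagate through the sampling bound (which is typically stated as $d^{O(n)}$ in degree and polynomial in bit-length, with not-always-uniform constants in the literature) so that the final exponent is honestly $\le L^{8n}$ rather than, say, $L^{cn}$ for some unspecified $c$. I do not expect any conceptual difficulty --- this is the standard ``quantitative $\exists\R$'' toolkit --- but one must pick the cited theorem carefully and perhaps re-derive the constant $8$ in the exponent to match the statement, or else loosen the statement to an unspecified $O(1)$ in the exponent if matching $8$ exactly proves annoying.
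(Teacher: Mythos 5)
Your proposal matches the paper's treatment: the paper does not prove \Cref{lem:BoundingRadii} but simply quotes it as formulated by Schaefer and \v{S}tefankovi\v{c}, pointing to \cite{basu2010bounding} and \cite{jeronimo2013minimum} for the underlying real-algebraic-geometry arguments, exactly as you propose to do. Your additional sketch (unpack the complexity parameter into degree/number/coefficient bounds, invoke an effective sampling theorem, apply a univariate root-magnitude bound, aggregate, and check the exponent) is a reasonable outline of where the bound comes from, but the paper itself offers no such derivation, so you are if anything more explicit than the source.
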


\begin{lemma}
\label{lem:MinDistance}
    If two semi-algebraic sets in $\R^n$ each of complexity at most $L \geq 5n$ have positive distance (for example, if they are disjoint and compact), then that
distance is at least $2^{-2^{L+5}}$ .
\end{lemma}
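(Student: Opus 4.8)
I would realize the distance as a well-controlled real algebraic number and then apply a root-separation bound. Write $S_1,S_2\subseteq\R^n$ for the two sets, each of complexity at most $L$. If either is empty the statement is vacuous, so assume both are non-empty; and if $\operatorname{dist}(S_1,S_2)\geq 1$ there is nothing to prove, so assume $\operatorname{dist}(S_1,S_2)<1$. Let $g(x,y)=\sum_{i=1}^{n}(x_i-y_i)^2$, a degree-$2$ polynomial on $\R^{2n}$ with coefficients in $\{-1,0,1\}$, and observe that $\operatorname{dist}(S_1,S_2)^2=\inf_{(x,y)\in S_1\times S_2} g(x,y)=:\mu$. The product $S_1\times S_2$ is a semi-algebraic subset of $\R^{2n}$ whose complexity is bounded by a modest function of $L$ (essentially $2L$ plus an $O(n)$ overhead for the new coordinates), and by hypothesis $\mu>0$.

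\textbf{Step 1: $\mu$ is a root of a cheap univariate integer polynomial.} The infimum of a polynomial over a semi-algebraic set is a semi-algebraic number: the set $\{t\in\R:\exists (x,y)\in S_1\times S_2,\ g(x,y)=t\}$ is a finite union of points and intervals of $\R$, and $\mu$ is its infimum. Via effective quantifier elimination (Renegar; Basu--Pollack--Roy), or more directly via the quantitative analysis of the minimum of a polynomial over a basic closed semi-algebraic set of Jeronimo and Perrucci~\cite{jeronimo2013minimum} --- using the ball bound of Basu and Roy~\cite{basu2010bounding} (equivalently \Cref{lem:BoundingRadii}) to reduce the unbounded, possibly non-attained case to a bounded one --- one obtains a nonzero polynomial $P\in\Z[u]$ having $\mu$ among its roots whose degree $D$ and coefficient bit-length $\sigma$ are bounded by explicit functions of $L$ and $n$.

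\textbf{Step 2: root separation and the square root.} Since $\mu\neq 0$ is a root of $P(u)=\sum_j c_j u^j\in\Z[u]$, write $P(u)=u^{k}Q(u)$ with $Q(0)\neq 0$; then $\mu$ is a nonzero root of $Q\in\Z[u]$, and the reverse-Cauchy bound gives $\mu\geq 1/\bigl(1+\max_j|c_j|\bigr)$. Hence if the coefficients of $P$ have absolute value at most $2^{\sigma}$, then $\mu\geq 2^{-\sigma-1}$, and therefore $\operatorname{dist}(S_1,S_2)=\sqrt{\mu}\geq 2^{-(\sigma+1)/2}$. It then remains to check, against the precise notion of complexity in use and the hypothesis $L\geq 5n$ (which is exactly what lets the dimension-$2n$ contributions from Step 1 be absorbed), that $\sigma+1\leq 2^{L+6}$, yielding $\operatorname{dist}(S_1,S_2)\geq 2^{-2^{L+5}}$.

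\textbf{Main obstacle.} The quantitative heart is Step 1: bounding the degree and, above all, the coefficient bit-length of the univariate polynomial cutting out $\mu$, while correctly handling unbounded sets and infima that are not attained. This is precisely the content of the cited real-algebraic-geometry results, and matching the explicit constant $5$ in the exponent is a matter of carefully tracking their estimates; by contrast, the reduction to a polynomial minimum and the Cauchy-type bound with the final square root are routine.
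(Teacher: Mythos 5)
The paper does not actually prove this lemma: it imports it verbatim from Schaefer and \v{S}tefankovi\v{c}~\cite{Schaefer2017_FixedPointsNash}, pointing to~\cite{basu2010bounding} and~\cite{jeronimo2013minimum} for the underlying real-algebraic-geometry content. Your sketch correctly identifies the route those references take --- express the squared distance as the infimum of a degree-$2$ polynomial over a product semi-algebraic set, realize that infimum as a root of an integer univariate polynomial of controlled degree and coefficient bit-length, and finish with a Cauchy-type lower bound on nonzero roots --- so it is consistent with the intended argument; but, exactly like the paper, it defers the entire quantitative heart (the degree and height bounds in your Step~1, and hence the specific constant $2^{-2^{L+5}}$ under the hypothesis $L \geq 5n$) to those same citations rather than establishing it.
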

In this context, the distance between two sets $A$ and $B$ in $\mathbb{R}^n$ is defined as 
\[d(A,B) = \inf_{a\in A, b\in B} \|a-b\|,\]
where that $\|\cdot\|$ denotes the Euclidean norm.

In order to apply \Cref{lem:BoundingRadii,lem:MinDistance}, we define the following three semi-algebraic sets. 
First, we define the solution set for $p$.
\[S = \{ x\in \R^n : p(x) = 0 \}.\]
Let $R = 2^{L^{8n}}$, where $L$ is the bit-complexity of $S$.
Note that $L$ upper bounded by the length of $p$.
Using \Cref{lem:BoundingRadii}, we know that $S$ is empty if and only if $S\cap B(R)$ is empty.
(We denote by $B(R)$ the ball of radius $R$ around the origin.)
This motivates us to define the sets
\[S_1' = \{ (x,z)\in \R^{n+1} : p(x) = z \land  \|x\|^2 \leq R^2\},\]
and 
\[S_2' = \{ (x,z)\in \R^{n+1} :  z = 0 \land  \|x\|^2 \leq R^2\}.\]
Note that $S_1'\cap S_2' = (S\cap B(R))\times \{0\}$.
Furthermore, $S_1'$ and $S_2'$ are compact and thus we can apply \Cref{lem:MinDistance}.
Unfortunately, the description complexities of $S_1'$ and $S_2'$ are exponential if we write $R$ out in binary.
Therefore, we define $S_1$ and $S_2$ slightly differently. 
Namely, we add some extra variables, whose sole purpose is to encode $R$ using repeated squaring. 
Let $\overline{L}$ be the maximum of the bit complexities of $S_1$ and $S_2$. Note that $\overline{L} = O(L + n \log L )$.
Let $\delta$ be a lower bound on the distance between $S_1$ and $S_2$, which is provided by \Cref{lem:MinDistance} applied to $S_1$ and $S_2$.
It holds that $S = \emptyset$ is equivalent to $S\cap B(R) = \emptyset$.
This in turn is equivalent to $S_1\cap S_2 = \emptyset$.
And this is equivalent to
\[p(x) \leq -\delta   \text{ or } \delta\leq p(x),\]
for all $x\in B(R)$.
In other words, we have
\begin{equation} \label{eq:1}
\exists x: \ -\delta <  p(x)<\delta \text{ and }  \|x\|^2 < R^2
\end{equation}
if and only if
\begin{equation*}
    \exists x: \ p(x) = 0
\end{equation*} 
This obviously also works if we would use \textit{any} smaller $\delta$.
Maybe not so obviously, this also works for any $R$ between $2^{L^{8n}}$ and $2^{L^{8n+1}}$.
The lower bound allows us to apply \Cref{lem:BoundingRadii}.
The upper bound allows us to apply \Cref{lem:MinDistance}.
Using repeated squaring, we create numbers $a > 2^{L^{8n}} $ and $b < 2^{L^{8n+1}}$. Furthermore, we add the inequalities $a<R<b$.
Note that we can construct a number $\delta$ that is at most $2^{-2^{\overline{L}+5}}$.
Our \Strict instance $\varphi$ consists of the three inequalities from \Cref{eq:1} and some extra variables and constraints to bound $R$  and $\delta$ as described above.

\paragraph{Reduction from \Strict to \Distinct.}
We can now finally show that \Distinct is \ER-complete.
\begin{lemma}\label{lem:distinctER}
    \Distinct is \ER-complete
\end{lemma}

\begin{proof}
We are not actually reducing from \Strict but from the instance $\varphi$ described in \Cref{eq:1}. 
Let $\delta, R$ be the given numbers, $x_1,\ldots,x_n$ the variables and $p$ the polynomial as in the previous paragraph.
Recall that $p$ has a degree at most four and the coefficients are bounded integers.
We will introduce new variables in order to construct $\delta$, $R$, $\|x\|^2$, and $p(x)$. 
Then, we will argue about distinctness.

First, we construct variables holding the values of the integers $-36n^3,\ldots, 36n^3$.
Those variables are meant to represent the coefficients of $p$.
Recall that the values of those coefficients were in this range.
It has been described above how to construct those integers.
Furthermore, we add variables holding the values $R = 2^{L^{8n}}$ and $\delta = 2^{-2^{\overline{L}+5}}$. 
(As we reduce from \Cref{eq:1}, we do not need to approximate the values $\delta $ and $R$, but can construct them directly, as \Distinct allows us to use equations.)
 If in this process two variables hold the same value, we can detect this and remove one of the variables.

Now, we construct $p(x)$ and $\|x\|^2$.
First, we construct all possible $\binom{n}{4}$ monomials of degree at most four.
For example, $N = xyzw$ is constructed in three steps.
$N_1 = xy$, $N_2 = N_1z$, and $N = N_2 w$.
Again, whenever two identical monomials appear,
we are able to notice this and make an appropriate replacement.
Let us denote 
\[p(x) = \sum_{i=1}^m a_i M_i.\]
Here, $a_i$ is the coefficient of the monomial $M_i$.
We construct $P_k = \sum_{i=1}^k a_i M_i$ inductively as follows:
\[P_1 = a_1 M_1 \text{ and } P_{k} = P_{k-1} + T_k,\]
with $T_k = a_kM_k$.

We denote by $P= P_m$ the variable holding the value of $p(x)$.
We construct $X = \|x\|^2 = x_1^2 + \ldots + x_n^2$ in the same way. 

At last, we add the variables $a,b,c$, and the constraints
\[P + \delta = a, \quad b + P = \delta, \quad R_2=R \cdot R, \quad c + X = R_2.\]
Now we can enforce the inequalities 
\[
 \ -\delta <  p(x)<\delta \text{ and }  \|x\|^2 < R^2
\]
by \[a>0,\quad b>0,\quad c > 0.\]

To summarize, 
we started with the variables $x_1,\ldots,x_n$.
We have created variables $C_1,\ldots, C_s$ that each holds a different integer/rational number.
Furthermore, we constructed some variables $V_1,\ldots, V_t$ such that each $V_i$ is a polynomial function $g_i(x)$.
Note that all the $g_i$ have a degree of at most four and small integer coefficients.
If for two variables $V_i$ and $V_j$, we have that $g_i = g_j$, we can detect this and remove one of them and replace each occurrence with the other one.
This finishes the description of the reduction.
We denote this instance $\psi$.

\bigskip

To show correctness, we observe that $\varphi$ has a solution if and only if
 $\psi$ has a solution as well.
Indeed, all new variables and constraints only ``build''  the correct polynomials and the numbers $\delta$ and $R$.
Thus it remains to show that $\varphi$ has a solution if and only if $\psi$ has a solution with all variables taking distinct values.
The backward direction is trivial.
Therefore, we assume that $\varphi$ has at least one solution $x\in \R^n$. As the solution  space of $\varphi$ is open, there is an open ball $B$ fully contained in the solution space.
Clearly, the variables $C_1,\ldots,C_s$ have all fixed distinct values by construction.
Every other variable $V_i$ can be expressed as a polynomial function $g_i(x)$. As all the $g_i$ are distinct, there must be some $x\in B$ such that $g_i(x)\neq g_j(x)$, for all $i\neq j$ and $g_i(x)\neq C_j$, for all $i,j$.
Otherwise, two of the polynomials would be identical.
This finishes the proof.
\end{proof}

\section{Arithmetic Using Matroids}
\label{sec:Arithmetic}

In this section, we describe how to encode addition and multiplication of real numbers using rank-$3$ matroids. We rely on the \emph{von Staudt constructions}, which are very well-known, perhaps with the caveat that they are usually stated for oriented matroids. But as we shall see, no orientedness is actually required to make them work. We follow the presentation of~\citet*{Matousek2014_IntersectionGraphsER}.

The setup for both operations is as follows. 
We have a line $\ell$ containing three distinct distinguished points called $\0$, $\1$, and $\II$, and a fixed second line $\ell_\infty$ crossing $\ell$ at $\II$.
Given the variable $x$, we denote the corresponding point representing it by $\x$ in boldface. In this way, we easily distinguish between a point and the corresponding variable.
A point $\x$ on the line $\ell$ can be interpreted as a real number using cross-ratios: if we denote by $d(\a,\b)$ the oriented distance between two points $\a$ and $\b$ on the line $\ell$, then the quantity \[(\x,\1;\0,\II):=\frac{d(\x,\0)\cdot d(\1,\II)}{d(\x,\II)\cdot d(\1,\0)}\] is a real number invariant under projective transformations, which, by a slight abuse of notation, we simply denote by $x$. 
Note that if $\II$ is progressively sent to infinity using a projective transformation and $d(\0,\1)$ is scaled to one in this formula, $x$ converges to $d(0,x)$.
Thus this cross-ratio matches the geometric location of $\x$ on $\ell$ under some projective transformation.

Now, given two points $\x$ and $\y$ on $\ell$, we describe geometric operations to compute points $\boldsymbol{x+y}$ and $\boldsymbol{x \cdot y}$ on $\ell$ representing their addition and their multiplication.

\begin{figure}[ht]
    \centering
    \def\svgwidth{\textwidth}
    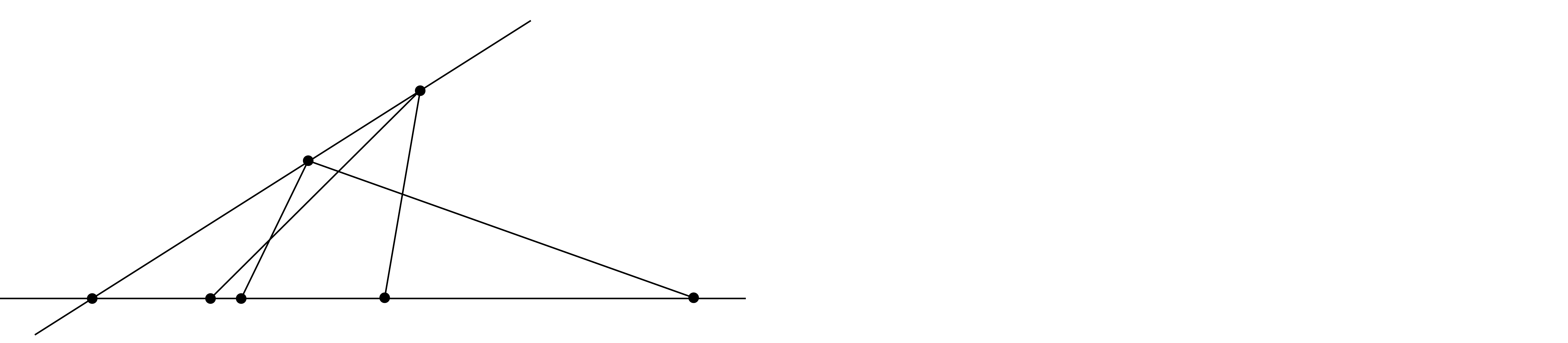
    \caption{Encoding addition geometrically.}
    \label{F:addition}
\end{figure}

\paragraph{Addition.} The construction is shown in Figure~\ref{F:addition}, left. We first introduce two distinct auxiliary helper points $\a$ and $\b$ situated anywhere on $\ell_\infty$. 
The line connecting $\0$ to $\b$ crosses the line connecting $\x$ to $\a$ in a point $\c$, then the line connecting $\II$ to $\c$ crosses the line connecting $\y$ to $\b$ in a point $\d$.
Similarly, the line connecting $\a$ to $\d$ crosses $\ell$ in a point that we define to be $\boldsymbol{x+y}$. 
The rationale behind this construction is that by a projective transformation we can consider $\ell_{\infty}$ to be a line at infinity, leading us to Figure~\ref{F:addition}, right. 
Now the line containing $\c$ and $\d$ crosses $\ell$ at a point at infinity, i.e., these two lines are parallel. 
Likewise, the line containing $\c$ and $\d$ is parallel to $\ell$. Then the parallelity of the lines $\0\c\b$ and $\y\d\b$ as well as $\x\c\a$ and $(\boldsymbol{x+y})\d\a$ immediately shows that $d(\0,\boldsymbol{x+y})=d(\0,\x)+d(\0,\y)$. 
In other words, the point $\boldsymbol{x+y}$ has value $x+y$, justifying the notation.

\paragraph{Multiplication.} The construction is shown in Figure~\ref{F:multiplication}, left. 
As before, we first introduce two distinct auxiliary helper points $\a$ and $\b$ situated anywhere on $\ell_\infty$. 
The line connecting $\1$ and $\b$ crosses the line connecting $\x$ and $\a$ at a point $\c$, and the line connecting $\0$ and $\c$ crosses the line connecting $\b$ and $\y$ at a point $\d$. 
Finally, the line connecting $\a$ and $\d$ crosses the line $\ell$ at a point that we define to be $\boldsymbol{xy}$. 
By sending the line $\ell_{\infty}$ at infinity using a projective transformation, we obtain Figure~\ref{F:multiplication}, right, where one can readily show using the parallel lines that $d(\0,\boldsymbol{xy})=d(\0,\x)d(\0,\y)$.
In other words, the point $\boldsymbol{xy}$ has value $xy$, justifying the notation.

\begin{figure}[ht]
    \centering
    \def\svgwidth{\textwidth}
    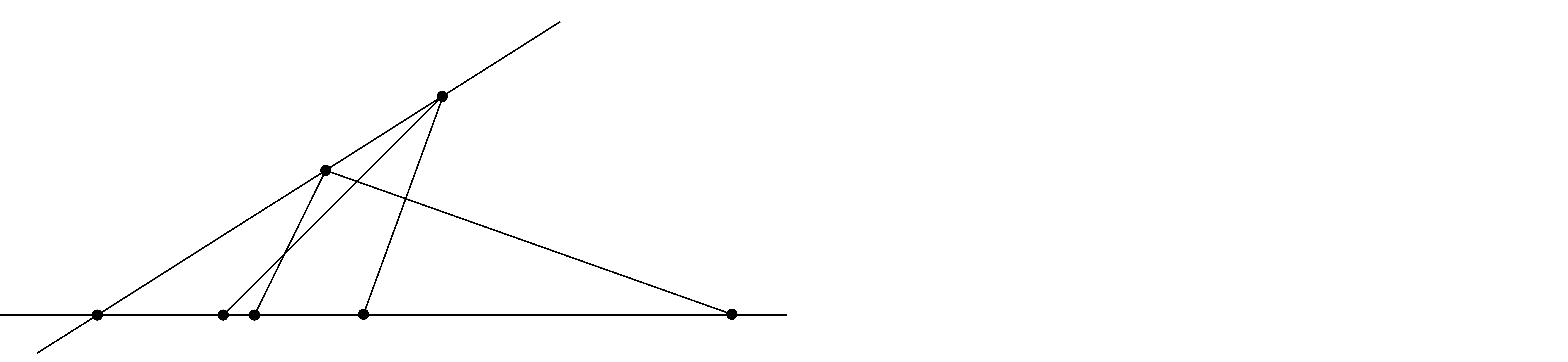
    \caption{Encoding multiplication geometrically.}
    \label{F:multiplication}
\end{figure}

\paragraph{Encoding these geometric constructions using matroids.} The addition and multiplication constructions defined above can be entirely encoded using matroids: the independent sets are exactly the empty set, all the singletons, all the pairs of points, and all the triples of non-aligned points. We point out that no orientation was ever enforced during the constructions, and thus we do not need oriented matroids to describe them. Furthermore, these operations can be chained arbitrarily, allowing to encode polynomials using matroids. However, some very important care needs to be taken here: while it is clear from the construction which points should be aligned, we also need to make sure that points that should not be aligned can be assumed to not be aligned. For example, it could a priori happen that a line going through two helper points $c_1$ and $c_2$ somehow accidentally happens to pass through a variable $x$ of the polynomial we are encoding. In that case, the matroid would not properly encode the geometric construction.

This motivates the following definition: we say that the set $S$ of helper points used during an addition or multiplication construction is \emph{free} if for any finite set of lines $L$ and points $P$ not involved in the construction, the points in $S$ can be perturbed so that:
\begin{itemize}
\item the incidences required by the addition or multiplication construction still hold,
\item no point of $S$ lies on a line $L$, and
\item no pair of points of $S$ is aligned with a point of $P$. \\
(In particular, no point of $S$ coincides with a point of $P$.)
\end{itemize}

A key property of the addition and multiplication construction is that the four helper points that they rely on form a free set. Indeed, the points $a$ and $b$ can be placed freely on the line $\ell_{\infty}$, and thus can be perturbed so as to avoid the lines and points of $L$ and $P$. Such a perturbation induces a perturbation of $c$ and $d$ in a two-dimensional open set, therefore allowing them to avoid lines of $P$ and $L$, but also ensuring that no line going through a pair of points in $\{a,b,c,d\}$ also goes through a point in $P$.

This freedom will be leveraged in the proofs of \Cref{thm:ordertypes} and \Cref{thm:Self-Contained}, to ensure that the matroid correctly encodes orientation predicates and systems of polynomial equations, respectively.

\paragraph{Strict inequalities.} The multiplication construction can be leveraged to simulate a strict inequality constraint $x>0$. Indeed, $x>0$ if and only if there exists $z\in \mathbb{R}$ such that $z\neq 0$ and $x=z^2=zz$, which can thus be simulated using a helper point $z$ distinct from $0$ and the above multiplication construction. However, this construction would require us to use as a helper point a fixed point $z$ on the line $\ell$, which could not be perturbed and thus would not be free. This can be resolved by using an additional variable: we first introduce a helper point $y$ for which we ensure that $y>0$ using the multiplication gadget. Then we use another multiplication gadget to ensure that $z>y$: note that this amounts to enforcing that $z$ lies on the same side of $y$ as $1$ does, i.e., this can be tested by using another multiplication gadget where $0$ is replaced by $y$. Now, in these two multiplication gadgets, neither $y$ nor any of the other helper points is fixed, and therefore we can perturb them to avoid any fixed set of lines and points, showing that they form a free set of points.

\section{Proof of \Cref{thm:ordertypes}}
\label{sec:Proof2}

For convenience, we first restate \Cref{thm:ordertypes}

\ordertypes*

\Cref{thm:ordertypes} is proved by using the arithmetic constructions described in the previous section, in particular the one for strict inequalities, to simulate orientation predicates. 

\paragraph{Simulating rank-3 order types.}

We first consider the case of rank equal to $3$ and treat the general case later.
Let $O = (E,\chi)$ be an order type on $n$ elements of rank $3$.
(We assume that $E = \{e_1,\ldots,e_n\}$.)
We construct a matroid $M$ from $O$ inductively.
To be more precise, we construct matroids, $M_3,\ldots,M_n = M$
such that $M_i$ simulates $O_i$.
Here, $O_i = (E_i, \chi_i)$ is the  order type formed by the first $i$ elements of $O$. All of our matroids $M_i$ will feature a distinguished line $\ell_{\infty}$.

The matroid $M_3$ merely contains 
$e_1,e_2,e_3$.
Without loss of generality we assume that the triple $t=\{e_1,e_2,e_3\}$  is 
independent in $O$.
(Otherwise, all triples in $O$ are dependent, which can trivially be simulated.) We first add a line at infinity, on which none of $e_1$, $e_2$ or $e_3$ lies.
We can assume that $t$ is oriented correctly in any representation of $M_3$,
as otherwise, we can just reflect the representation and
get a correct representation.
Therefore, $M_3$ satisfies the induction hypothesis.

Now, let us assume that we have already constructed $M_{i-1}$. 
We construct $M_i$ from $M_{i-1}$, by adding the element $e = e_i$ from $O$.
Furthermore, for each triple $t = \{a,b,e\} \subset E_i$, we 
need to ensure that $t$ is oriented correctly.
In case that $\chi(t) = 0$, we can encode this directly by specifying that $\{a,b,e\}$ forms a rank-$2$ dependent set in the matroid $M_i$.
Thus it remains to consider the case $\chi(t)\neq 0$.
Let $t' = \{a,b,c\} \subseteq E_{i-1}$ be such that
$\chi(t' ) \neq 0$.
Note that such a triple $t'$ must exists, as otherwise all points of $E_i$
lie on a common line. 
This would be a contradiction to the fact that $M_3$ is formed by an independent triple.
Using the orientation of $t'$, we add a small constant number of helper points
and we will enforce the correct orientation of $t$ in $M_i$ as well.
Thus it remains to show the following lemma, where we use the notion of a free set of helper points defined in \Cref{sec:Arithmetic}.
\begin{lemma}\label{lem:orientationgadget}
    Given the independent triple $t' = \{a,b,c\}$ in a matroid $M$
    and another point $e$, we can enforce that in any valid representation of $M$, the triple $t = \{a,b,e\}$
    is oriented identically to $t'$, or that it is oriented opposite to $t'$.
    We do this by adding a constant number of helper points to $M$. Furthermore, this set of helper points is free.
\end{lemma}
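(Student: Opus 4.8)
The plan is to reduce the planar sidedness predicate ``$e$ lies on the prescribed side of the line $L_{ab}$ through $a$ and $b$'' to a one-dimensional sidedness predicate on an auxiliary line, and then to enforce the latter with the von Staudt multiplication construction of \Cref{sec:Arithmetic}. Throughout we work inside the plane (so $M$ has rank $3$), we use the distinguished line at infinity $\ell_\infty$ of $M$, and we add to $M$ the (harmless) constraint that $\{a,b,e\}$ is independent, so that $e\notin L_{ab}$ in every valid representation.

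First I would introduce a fresh helper point $v$ on $L_{ab}$ and a fresh helper line $\ell$ (described by two helper points), together with derived helper points $o_0,o_c,o_e$, defined as the intersections with $\ell$ of $L_{ab}$, of the line through $v$ and $c$, and of the line through $v$ and $e$ respectively, and the point $q$ where $\ell$ meets $\ell_\infty$; each is attached to $M$ by the obvious collinearity constraints. The geometric heart of the argument is the following: if $v$ and $\ell$ are chosen so that $\ell$ strictly separates $v$ from $\{c,e\}$, then $o_c$ lies on the ray of $\ell\setminus\{o_0\}$ pointing into the halfplane of $L_{ab}$ that contains $c$, while $o_e$ lies on the ray pointing into the halfplane that contains $e$; consequently ``$o_c$ and $o_e$ lie on the same side of $o_0$ on $\ell$'' is equivalent to ``$\chi(a,b,e)=\chi(a,b,c)$''. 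I would prove this by the elementary remark that the transversal $\ell$ splits $\ell\setminus\{o_0\}$ into two rays, one in each halfplane of $L_{ab}$, and that the separation hypothesis forces $o_c$ to lie strictly between $v$ and $c$ and $o_e$ strictly between $v$ and $e$. I would also note that such pairs $(v,\ell)$ form a non-empty open set: place $v$ far out on $L_{ab}$ and let $\ell$ cross $L_{ab}$ transversally close to $a,b$, tilted so that $c$ and $e$ fall on the opposite side of $\ell$ from $v$.

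Next I would enforce the one-dimensional predicate. Regard $\ell$ as an arithmetic line with $o_0,o_c,q$ in the roles of $\0,\1,\II$ and with line at infinity $\ell_\infty$. Because a valid representation keeps every point off $\ell_\infty$, the cross-ratio coordinate on $\ell$ agrees with an affine coordinate, so ``value $\geq 0$'' means exactly ``on the closed $o_c$-side of $o_0$''. Now apply the multiplication construction of \Cref{sec:Arithmetic} on $\ell$ with a helper point $z$ to build the point of value $z^2$, with its output taken to be $o_e$. Since $z^2\geq 0$, the point $o_e$ lies on the closed $o_c$-side of $o_0$; as $o_e\neq o_0$ (because $\{a,b,e\}$ is independent), it lies strictly on that side, so by the previous step $\chi(a,b,e)=\chi(a,b,c)$ in every valid representation. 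The ``opposite'' case follows by first reflecting $o_c$ through $o_0$ with the von Staudt addition construction and running the same gadget with the reflected point as the unit.

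Finally, freeness. The genuinely free parameters of the gadget are the position of $v$ on $L_{ab}$, the line $\ell$, and the positions on $\ell_\infty$ of the two free auxiliary points of the multiplication construction; every remaining helper point ($o_0,o_c,o_e,q,z$, and the internal points of the multiplication gadget) is a function of these. These parameters range over the non-empty open set of configurations meeting the separation hypothesis, and for any finite sets $L$ of lines and $P$ of points, each forbidden event --- a helper point lying on a line of $L$, or two helper points being collinear with a point of $P$ --- is a proper closed condition on this parameter space, since the derived points genuinely move as the parameters vary. Hence a generic choice inside the open set avoids all finitely many forbidden events, exactly as in the freeness argument for the multiplication gadget. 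I expect the main difficulty to be the second step: isolating the right separation hypothesis so that the central projection is orientation-faithful, checking that it can always be realized, and verifying that the freedom it induces on the derived points is rich enough for the freeness claim.
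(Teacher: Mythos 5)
There is a genuine gap in the reduction from the planar sidedness predicate to the one-dimensional one. Your key equivalence ``$o_c$ and $o_e$ lie on the same side of $o_0$ on $\ell$ iff $\chi(a,b,e)=\chi(a,b,c)$'' is proved only under the separation hypothesis that $\ell$ strictly separates $v$ from $\{c,e\}$. That hypothesis is an open condition on the positions of the helper points, and a matroid records only incidences: nothing in $M$ forces a valid representation to satisfy it. Exhibiting a non-empty open set of separating pairs $(v,\ell)$ gives the completeness direction (a representation of $O$ extends to one of $M$), but soundness requires the equivalence in \emph{every} valid representation, and it fails. Concretely, take $L_{ab}$ the $x$-axis, $v=(0,0)$, $c=(1,1)$, $e=(1,-1)$, and $\ell$ the line through $(2,2)$ and $(-1,1)$: all incidence constraints of your gadget hold, $o_0=(-4,0)$, $o_c=(2,2)$ and $o_e=(-1,1)$ lie on the same side of $o_0$ on $\ell$, yet $c$ and $e$ lie on opposite sides of $L_{ab}$. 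The failure mode is that the central projection from $v$ onto $\ell$ may preserve the side of $L_{ab}$ for one of $c,e$ and flip it for the other; when that happens, $M$ has a valid representation in which the multiplication gadget is satisfied but $t$ has the wrong orientation, so $M$ does not simulate $O$.

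The paper avoids this by never projecting onto an auxiliary line from an auxiliary center: it always measures one-dimensional sidedness at the intersection of the relevant line with $L_{ab}$ itself, which makes the two-dimensional-to-one-dimensional translation unconditional. It first creates $c'$ on the line through $a$ and $c$, constrained (via the $x=z^2$ gadget with $a$ playing the role of zero) to lie on the same side of $a$ as $c$; since that line meets $L_{ab}$ exactly at $a$, this is automatically the same side of $L_{ab}$ as $c$ in every valid representation. It then takes $d$ to be the intersection of $L_{ab}$ with the line through $c'$ and $e$, and enforces that $c'$ and $e$ are on the same (or opposite) side of $d$ on that line, which again is literally sidedness with respect to $L_{ab}$. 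Your construction could be repaired along these lines, but as written the central equivalence is not enforced by the matroid, so the lemma's ``any valid representation'' clause is not established.
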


\begin{proof}
This construction goes in essentially two steps.
First, we show how we can enforce an element $x$
on the line $\ell = \ell(a,b)$
to be on the same side of $a$ as $b$.

\begin{center}
    \includegraphics[page = 2]{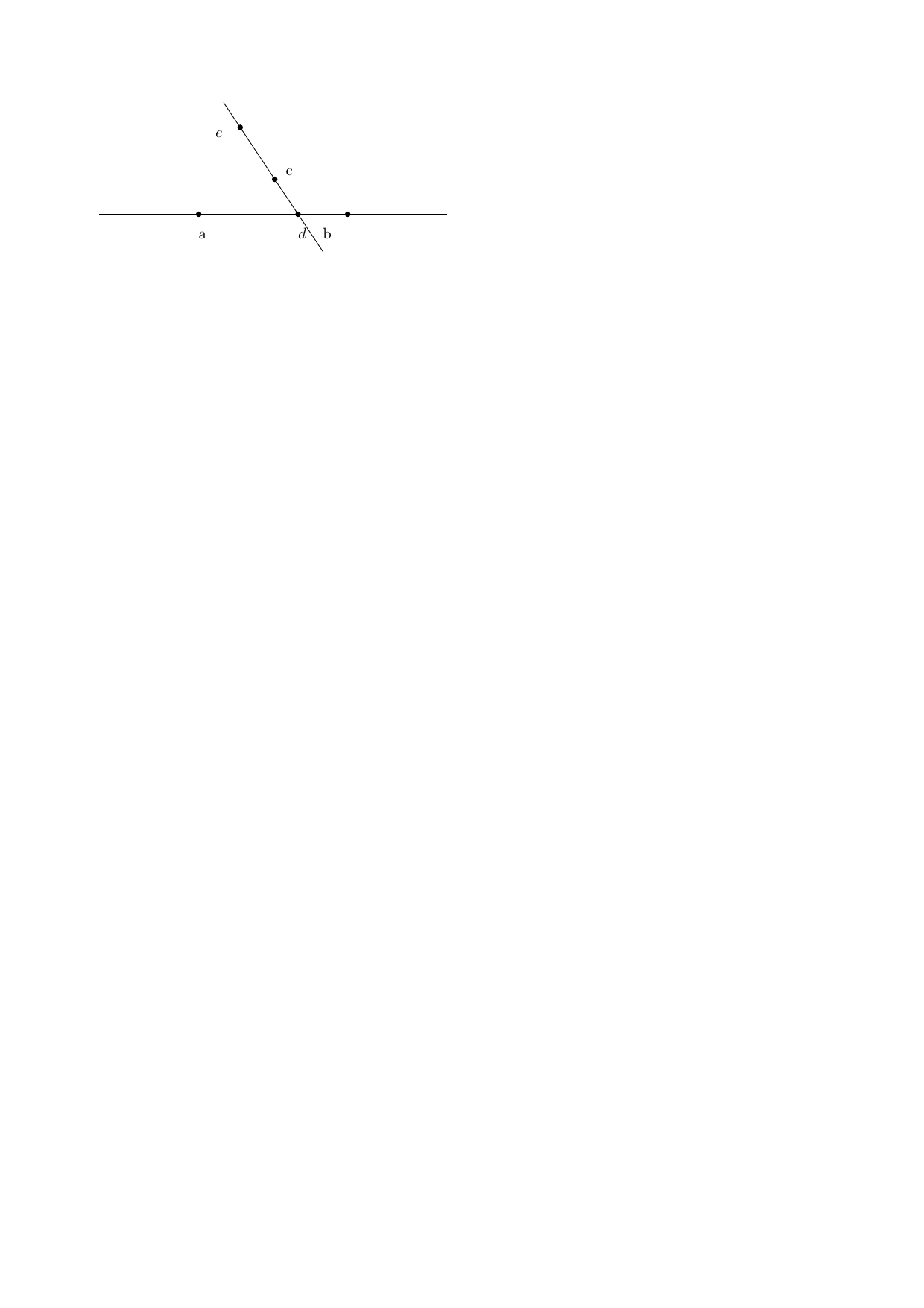}
\end{center}

This first step relies on standard constructions to encode arithmetic operations as explained in
\Cref{sec:Arithmetic}.
Although those constructions can be well described
and understood, without any reference to arithmetic operations,
the language of arithmetic operations gives
a better intuition.
The underlying idea is that we interpret $a$ as zero, and $b$ as one. Indeed, the constraint that $x$ lies on the same side of $a$ as $b$  in any representation where the line $\ell_{\infty}$ is sent to infinity amounts to enforcing that $x>0$ when interpreting $a$ as zero and $b$ as one. As explained in \Cref{sec:Arithmetic}, this can be encoded using multiplication gadgets, in such a way that none of the helper points accidentally lie on a previously used line or point.

In the second step, we use the previous tool to enforce that
$e$ lies on the same (or the opposite) side  of the line $\ell(a,b)$ as $x$.

\begin{figure}[ht]
    \centering
    \def\svgwidth{6cm}
    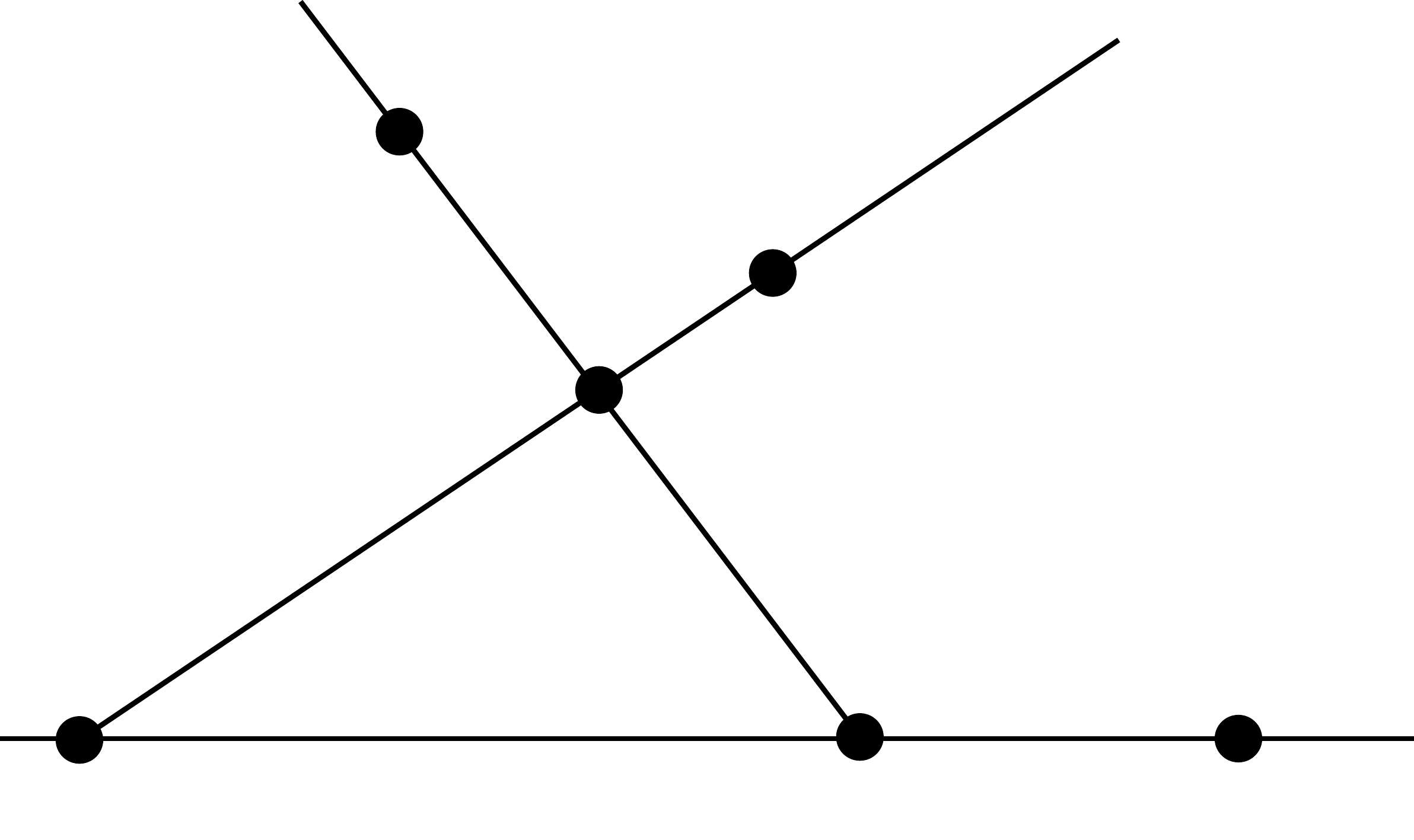
    \caption{Forcing $c$ to be on the same side of $\ell(a,b)$ as $e$. Rays ending at a point denote the strict inequality constraints.}
    \label{F:triples}
\end{figure}

To this end, we first define a point $c'$ for which we enforce that on the line $\ell(a,c)$, it lies on the same side of $a$ as $c$. Then we define a point $d$ situated on the line $\ell(a,b)$ and
the line $\ell'$ through $e$ and $c'$, see Figure~\ref{F:triples}
The condition that $e,c'$ are on the same side of $d$ on the line $\ell'$ can be enforced using the previous gadgets. This condition is equivalent to $c'$ and $e$ being on the same side of $\ell(a,b)$. Lastly, by construction the triple $\{a,b,c'\}$ is oriented identically to $\{a,b,c\}$. 

We can use the same tool to enforce that $e$ and $c$ lie on the opposite side of the line $\ell(a,b)$. We define $c'$ as before, so that it lies on the same side of $\ell(a,b)$ as $c$, and then we simply need to enforce instead that $c'$ lies on the opposite side of $\ell'$ with respect to $d$ (where $\ell'$ and $d$ are defined as above).

As explained in \Cref{sec:Arithmetic}, the strict inequalities gadgets can be directly encoded into the independent sets of the matroid, and by construction, the helper points always have at least one degree of freedom, and thus can form a free set of points. Therefore, the matroid $M_i$ is entirely defined by the dependency constraints indicated by the lines in the geometric constructions. This concludes the proof.
\end{proof}

We can now conclude the proof of \Cref{thm:ordertypes} for rank-$3$ matroids. Given an  order type $O$ of rank $3$, for which we can assume that there is at least one independent set of size $3$ (otherwise the orientation predicates are trivial), we inductively encode the orientation predicates into a matroid using the helper points provided by Lemma~\ref{lem:orientationgadget}. At each stage of the induction, the freedom of the set of helper points can be used to ensure that the helper points do not yield any accidental dependencies with all the points and lines previously placed. At the end of the induction, by Lemma~\ref{lem:orientationgadget}, any valid representation of the resulting matroid $M$ induces a representation of $O$. Conversely, any representation of $O$ can be extended to a representation of $M$. Therefore $M$ simulates $O$. All the constructions can clearly be done in linear time, which concludes the proof.

\paragraph{Simulating Rank-$k$ Matroids.}
The construction for rank $k$ is identical to that of rank~$3$ as explained above,
with two exceptions.
First, the induction basis starts with a matroid on $k$ elements instead of $3$.
Second, we have to describe the simulation of an oriented $k$-tuple.
For this, we use the same trick that forces a point to lie on a 
specific side of a line, where we just replace a line with a hyperplane~$P$, as shown in Figure~\ref{F:hyperplane}.

\begin{figure}[ht]
    \centering
    \def\svgwidth{7cm}
    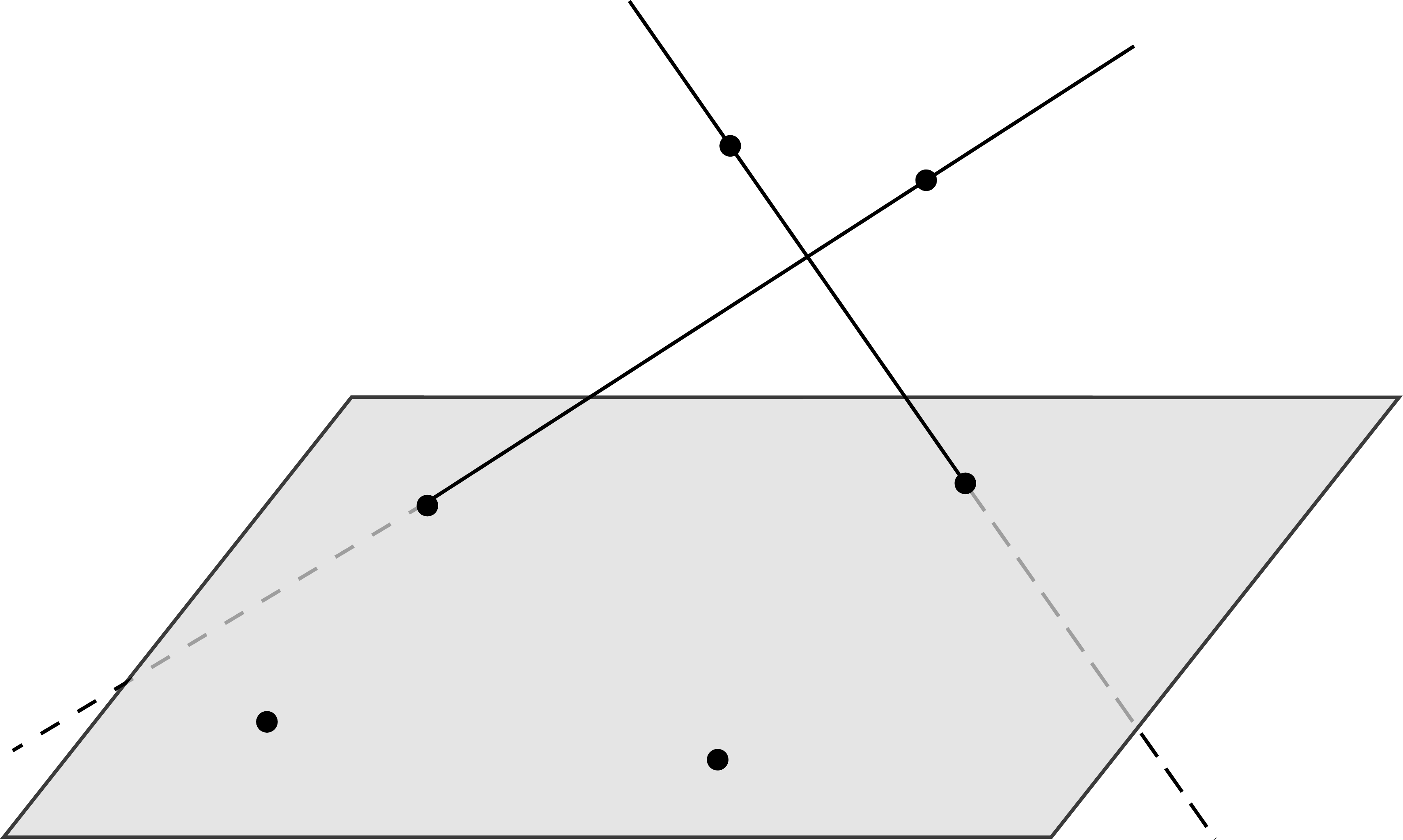
    \caption{Forcing $x$ to be on the same side of $P$ as $a$.}
    \label{F:hyperplane}
\end{figure}

More precisely, in an inductive step where we add a point $x$, we need to enforce the orientation of all the independent $k$-tuples $t=\{a_1, \ldots a_{k-1}, x\}$. For this, we take another $k$-tuple $t'=\{a_1, \ldots a_{k-1}, a\}$ for which $\chi(t')\neq 0$, and we devise a gadget to encode the constraint that $t$ is oriented identically, or opposite to $t'$. This will be done using the gadget described above that enforces that in any valid representation, for three aligned points $a, b$ and $c$, $c$ lies on the same side of $a$ as $b$. That gadget was defined for rank $3$ matroids and thus representations into $\mathbb{R}^2$, but readily works in higher dimensions: one must simply ensure using dependency constraints that all the points involved in the gadget lie on a common plane. The hyperplane at infinity will intersect this common plane in a line, which takes the role of the line at infinity in the gadgets.

Now, we proceed as in the rank $3$ case: considering the hyperplane $P$ generated by $\{a_1, \ldots a_{k-1}\}$, 
we first define a point $a'$ that lies on the same side as $a$ on the line $\ell(a_1,a)$. We introduce the point $d$ at the intersection of $P$ and the line going through $a'$ and $x$. Then we enforce that $x$ is on the same side of $d$ as $a'$. In order to enforce that $a'$ and $x$ are on opposite sides of $P$, we instead enforce that $x$ and $a'$ are on opposite sides of $d$.
Finally, we observe that all the helper points that we have introduced are free, where the notion of freedom is generalized to also disallow $k$-dimensional dependencies: once again this follows from the fact that all the helper points that we introduce have some wiggle room to be perturbed. The rest of the proof proceeds identically to the rank-$3$ case.

\section{Proof of \Cref{thm:Self-Contained}}
\label{sec:Proof1}

For convenience, let us first restate \Cref{thm:Self-Contained}

\selfcontained*

The proof of \Cref{thm:Self-Contained} is by a direct reduction from the problem \Distinct, using the arithmetic constructions described in \Cref{sec:Arithmetic}. Let us start with an instance of \Distinct given by variables $X= \{x_1, \ldots, x_n\}$ and constraints of the form

\[x+y = z, \quad  x\cdot y = z, \quad x = 1, \quad x>0,\]

for $x,y,z \in X$, with the promise that if there is a solution, then there is one where all the variables are pairwise different.

As in the proof of \Cref{thm:ordertypes}, we can construct the constant $0$ using a constraint $x+y=x$, and thus, without loss of generality, by the distinctness assumption, we can assume that there are exactly two variables in $X$ equal to respectively $0$ or $1$, and that the others are different from $0$ and $1$. By a slight abuse of language, we remove those from $X$ and denote them directly by $0$ and $1$ in the rest of the description. We define a rank-$3$ matroid $M$ as follows. First, we have a line $\ell$ consisting of three distinguished distinct points $0$, $1$ and $\infty$, as well as $n$ distinct points (and distinct from $\{0,1,\infty\}$) corresponding to the variables $\{x_1, \ldots x_n\}$. We also add a line at infinity $\ell_\infty$ going through $\infty$ and no other point.

We then use the gadgets from \Cref{sec:Arithmetic} to inductively encode all the constraints. Note that there is at most one constraint $x=1$ which can be hardcoded from the start. Then let us assume inductively that we have defined a matroid $M_i$ encoding the first $i$ constraints. The $(i+1)$-th constraint is an addition, a multiplication, or a strict inequality, which can be encoded using a geometric construction as described in \Cref{sec:Arithmetic}. Now, the key property of these constructions is that the set of helper points is free. Therefore, the only linear dependencies involved in the construction are those of that construction, which can be readily encoded into a matroid $M_{i+1}$. 

We now prove that the \Distinct instance has a solution with distinct variables if and only if the matroid $M$ is representable over the reals. First, if \Distinct has a solution, we obtain a representation of $M$ over the reals by placing all the variables $x_1, \ldots ,x_n$ on the line $\ell$ at the values indicated by the solution, and by sending the line at infinity to infinity. The geometric constructions are then represented one by one, and since the helper points are free, by perturbing them if needed we can ensure that they are all distinct, that no three of them are colinear, and that they do not form colinearities with previously placed points. Therefore, this constitutes a correct representation of the matroid. Conversely, given a representation of the matroid $M$ over the reals, we read the values of the variables on the line $\ell$ using cross-ratios as explained in \Cref{sec:Arithmetic} (or equivalently we send $\ell_\infty$ to $\infty$ and use the oriented distance to $0$). This gives us values for the variables of the \Distinct instance. The definition of the addition, multiplication and strict inequality constructions ensures that each of the constraints will be satisfied. Furthermore, by definition of the matroid, all of the variables are distinct. This finishes the proof.

\acknowledgements We are grateful to the anonymous reviewers for their numerous suggestions that greatly improved the article.

\bibliographystyle{abbrvnat}
\bibliography{Finalbiblio}

\end{document}